\documentclass[11pt]{article}
\usepackage{fullpage}
\pdfoutput=1
\usepackage{graphicx}
\usepackage{amsmath}
\usepackage{amssymb}
\usepackage{subfigure}
\usepackage{epstopdf}
\newtheorem{theorem}{Theorem}
\newtheorem{corollary}{Corollary}
\newenvironment{proof}{{\noindent \em Proof.}}{\hfill$\Box$\\}

\newcommand{\Oh}[1]
    {\ensuremath{\mathcal{O} \hspace{-.5ex} \left( {#1} \right)}}

\newcommand{\occ}[2]
    {\ensuremath{\mathrm{occ} \hspace{-.5ex} \left( {#1}, {#2} \right)}}
\newcommand{\socc}[2]
    {\ensuremath{\mathrm{occ} \left( {#1}, {#2} \right)}}

\newcommand{\lmax}{\ell_{\max}}
\newcommand{\pmin}{p_{\min}}
\newcommand{\clong}{c_{\mathrm{long}}}
\newcommand{\Codes}{\mathsf{Codes}}
\newcommand{\Symb}{\mathsf{Symb}}
\newcommand{\sR}{\mathsf{sR}}
\newcommand{\first}{\mathsf{first}}
\newcommand{\hash}{\mathsf{hash}}
\newcommand{\ihash}{\mathsf{ihash}}
\newcommand{\HH}{\mathcal{H}}
\newcommand{\LL}{\mathcal{L}}

\newcommand{\EsWiki}{\texttt{EsWiki}}
\newcommand{\EsInv}{\texttt{EsInv}}
\newcommand{\Indo}{\texttt{Indo}}

\begin{document}

\title{Efficient and Compact Representations of Prefix Codes
\thanks{Funded in part by Millennium Nucleus Information and
Coordination in Networks ICM/FIC RC130003 for Gonzalo Navarro, and by MINECO (PGE and FEDER) [TIN2013-46238-C4-3-R, TIN2013-47090-C3-3-P]; 
 CDTI, AGI, MINECO [CDTI-00064563/ITC-20133062]; ICT COST Action IC1302; Xunta de Galicia (co-founded with FEDER) [GRC2013/053], and AP2010-6038 (FPU Program) for Alberto Ord\'o\~nez.}
~\thanks{Early partial versions of this work appeared in {\em Proc. SOFSEM
2010} \cite{GNN10} and {\em Proc. DCC 2013} \cite{NO13}.}
~\thanks{Corresponding author: Alberto Ord\'o\~nez, email: alberto.ordonez@udc.es.}
}

\author{Travis Gagie$^\star$   ~~~~~
        Gonzalo Navarro$^\dag$ ~~~~~
    	Yakov Nekrich$^\ddag$  ~~~~~
	Alberto Ord\'o\~nez$^+$ \\  \ \\
{\small $^\star$ Department of Computer Science, University of Helsinki,
Finland} \\
{\small $^\dag$ Department of Computer Science, University of Chile, Chile} \\
{\small $^\ddag$ David R. Cheriton School of Computer Science, University of Waterloo,
Canada} \\
{\small $^+$ Database Laboratory, Universidade da Coru\~na, Spain}}

\date{}

\maketitle

\begin{abstract}
Most of the attention in statistical compression is given to the space used
by the compressed sequence, a problem completely solved with optimal prefix
codes. However, in many applications, the storage space used to represent the 
prefix code itself can be an issue. In this paper we introduce 
and compare several techniques to store prefix codes. Let $N$ be the sequence
length and $n$ be the alphabet size. Then a naive storage of an optimal prefix 
code uses $\Oh{n\log n}$ bits. Our first technique shows how to use 
$\Oh{n\log\log(N/n)}$ bits to store the optimal prefix code. Then we introduce an
approximate technique that, for any $0<\epsilon<1/2$, takes
$\Oh{n \log \log (1 / \epsilon)}$ bits to store a prefix code with average 
codeword length within an additive $\epsilon$ of the minimum. Finally, a
second approximation takes, for any constant \(c > 1\), $\Oh{n^{1 / c} \log n}$
bits to store a prefix code with average codeword length at most $c$ times 
the minimum. In all cases, our data structures allow encoding and decoding of
any symbol in $\Oh{1}$ time. We experimentally compare our new techniques with
the state of the art, showing that we achieve 6--8-fold space reductions, 
at the price of a slower encoding (2.5--8 times slower) and decoding (12--24 
times slower). The approximations further reduce this space and improve the
time significantly, up to recovering the speed of classical implementations, 
for a moderate penalty in the average code length. 
As a byproduct, we compare various heuristic, approximate, and
optimal algorithms to generate length-restricted codes, showing that the
optimal ones are clearly superior and practical enough to be implemented.
\end{abstract}

\section{Introduction} \label{sec:intro}

Statistical compression is a well-established branch of Information Theory.
Given a text $T$ of length $N$, over an alphabet of $n$ symbols 
$\Sigma= \{ a_1, \ldots, a_n \}$ with relative frequencies
$P = \langle p_1, \ldots, p_n \rangle$ in $T$ (where $\sum_{i=1}^n p_i=1$), 
the binary {\em empirical 
entropy} of the text is $\HH(P) = \sum_{i=1}^n p_i \lg (1/p_i)$, where $\lg$
denotes the logarithm in base 2. An {\em instantaneous code} assigns a binary
code $c_i$ to each symbol $a_i$ so that the symbol can be decoded as soon as
the last bit of $c_i$ is read from the compressed stream. An optimal (or
minimum-redundancy) instantaneous code (also called a prefix code) like 
Huffman's \cite{Huf52} finds a prefix-free set of codes $c_i$ of length
$\ell_i$, such that its average length $\LL(P) = \sum_{i=1}^n p_i\ell_i$ is
minimal and satisfies $\HH(P) \le \LL(P) < \HH(P)+1$. This guarantees that
the encoded text uses less than $N(\HH(P)+1)$ bits. Arithmetic codes achieve
less space, $N\HH(P)+2$ bits, however they are not instantaneous, which
complicates and slows down both encoding and decoding.

In this paper we are interested in instantaneous codes. In terms of the
{\em redundancy} of the code, $\LL(P)-\HH(P)$, Huffman codes are optimal and
the topic can be considered closed.
How to store 
{\em the prefix code itself}, however, is much less studied. It is not hard to
store it using $\Oh{n\log n}$ bits, and this is sufficient when $n$ is much
smaller than $N$.
There are several scenarios, however, where the storage of the code itself is
problematic. One example is word-based compression, which is a standard to
compress natural language text \cite{BSTW86,Mof89}. Word-based Huffman
compression not only performs very competitively, offering compression ratios
around 25\%, but also benefits direct access \cite{WMB99}, text searching 
\cite{MNZBY00}, and indexing \cite{BFLN12}. In this case the alphabet
size $n$ is the number of distinct words in the text, which can reach many
millions. Other scenarios where large alphabets arise are the compression of
East Asian languages and general numeric sequences. Yet another
case arises when the text is short, for example when it is cut
into several pieces that are statistically compressed independently, for
example for compression boosting \cite{FGMS05,KP11} or for interactive
communications or adaptive compression \cite{BFNP07}. The more effectively the 
codes are stored, the finer-grained can the text be cut.

During encoding and decoding, the code must be maintained in main
memory to achieve reasonable efficiency, whereas the plain or the compressed
text can be easily read or written in streaming mode. Therefore, the size of the
code, and not that of the text, is what poses the main memory requirements for
efficient compression and decompression. This is particularly stringent on
mobile devices, for example, where the supply of main memory is comparatively
short. With the modern trend of embedding small devices and sensors in all 
kinds of objects (e.g., the ``Internet of Things''%
\footnote{\tt http://en.wikipedia.org/wiki/Internet\_of\_Things}), those
low-memory scenarios may become common.

In this paper we obtain various relevant results of theoretical and practical
nature about how to store a code space-efficiently, while also considering the
time efficiency of compression and decompression. Our specific contributions
are the following.

\begin{enumerate}
\item In Section~\ref{sec:exact}
we show that it is possible to store an optimal prefix code within
$\Oh{n\log \lmax}$ bits, where $\lmax = \Oh{\min(n,\log N)}$ is the 
maximum length of a code (Theorem~\ref{thm:exact}). Then we refine the space
to $\Oh{n\log\log(N/n)}$ bits (Corollary~\ref{cor:exact}). 
Within this space, encoding and decoding are carried out in constant time on
a RAM machine with word size $w = \Omega(\log N)$.
The result is obtained by using canonical Huffman codes \cite{SK64}, 
fast predecessor data structures \cite{FW93,PT08} to find code lengths, and 
multiary wavelet trees 
\cite{GGV03,FMMN07,BN12} to represent the mapping between codewords and symbols.
\item In Section~\ref{sec:additive}
we show that, for any \(0 < \epsilon < 1 / 2\),
it takes $\Oh{n \log \log (1 / \epsilon)}$ bits to store a
prefix code with average codeword length at most $\LL(P)+\epsilon$. Encoding
and decoding can be carried out in constant time on
a RAM machine with word size $w = \Omega(\log n)$.
Thus, if we can tolerate a small constant additive increase in the average 
codeword length, we can store a prefix code using only $\Oh{n}$ bits. 
We obtain this result by building on the above scheme, where we use
length-limited optimal prefix codes \cite{ML01} with a carefully chosen
$\lmax$ value.
\item In Section~\ref{sec:multiplicative}
we show that, for any constant \(c > 1\), it takes $\Oh{n^{1 / c} \log n}$ bits
to store a prefix code with average codeword length at most $c\,\LL(P)$.
Encoding and decoding can be carried out in constant time on
a RAM machine with word size $w = \Omega(\log n)$.
Thus, if we can tolerate a small constant
multiplicative increase, we can store a prefix code in $o(n)$ bits.
To achieve this result, we only store the codes that are shorter than about
$\lmax/c$, and use a simple code of length $\lmax+1$ for the others.
Then all but the shortest codewords need to be explicitly represented.
\item In Section~\ref{sec:experiments}
we engineer and implement all the schemes above and compare them with careful
implementations of state-of-the-art optimal and suboptimal codes.
Our model representations are shown to use 6--8 times less space than
classical ones, at the price of being several times slower for
compression (2.5--8 times) and decompression (12--24 times). 
The additive approximations reduce these spaces up to a half and the times
by 20\%--30\%, at the expense of a small increase (5\%) in the redundancy.
The multiplicative approximations can obtain models of the same size
of the additive ones, yet increasing the redundancy to around 10\%.
In exchange, they are about as fast as the classical compression methods.
If we allow them increase the redundancy to 15\%--20\%, the multiplicative
approximations obtain model sizes
that are orders of magnitude smaller than classical representations.
\item As a byproduct, Section~\ref{sec:experiments} also compares varios
heuristic, approximation, and exact algorithms to generate length-restricted
prefix codes. The experiments show that the optimal algorithm is practical to 
implement and runs fast, while obtaining significantly better average code
lengths than the heuristics and the approximations. A very simple-to-program
approximation reaches the same optimal average code length in our experiments,
yet it runs significantly slower.
\end{enumerate}

Compared to early partial versions of this work \cite{GNN10,NO13}, this
article includes more detailed explanations, better implementations of our
exact scheme, the first implementations of the approximate schemes, 
the experimental study of the performance of algorithms that generate
length-limited codes, and stronger baselines to compare with.

\section{Related Work} \label{sec:related}

A simple pointer-based implementation of a Huffman tree takes $\Oh{n \log n}$ 
bits, and it is not difficult to show this is an optimal upper bound for 
storing a prefix code with minimum average codeword length.  For example, 
suppose we are given a permutation $\pi$ over $n$ symbols.  Let $P$ be the 
probability distribution that assigns probability \(p_{\pi(i)} = 1 / 2^i\) for 
\(1 \leq i < n\), and probability \(p_{\pi(n)}=1 / 2^{n - 1}\). Since $P$ is
dyadic, every optimal prefix code assigns 
codewords of length $\ell_{\pi(i)} = i$, for \(1 \leq i < n\), and
\(\ell_{\pi(n)} = n - 1\).  Therefore, given any optimal prefix code and a bit
indicating whether \(\pi (n - 1) < \pi (n)\), we can reconstruct $\pi$. Since 
there are $n!$ choices for $\pi$, in the worst case it takes \(\Omega (\log
n!) = \Omega (n \log n)\) bits to store an optimal prefix code. 

Considering the argument above, it is natural to ask whether the same lower
bound holds for probability distributions that are not so skewed, and the
answer is no. A prefix code is {\em canonical}~\cite{SK64,MT97} if a 
shorter codeword is always lexicographically smaller than a longer codeword. 
Given any prefix code, we can always generate a canonical code with the same
code lengths. Moreover, we can reassign the codewords
such that, if a symbol is lexicographically the $j$th with a codeword of
length $\ell$, then it is assigned the $j$th consecutive codeword of length
$\ell$. It is clear that it is sufficient to store the codeword length of each
symbol to be able to reconstruct such a code, and thus the code can be
represented in $\Oh{n\log \lmax}$ bits.

There are more interesting upper bounds than $\lmax \le n$. 
Katona and Nemetz~\cite{KN76} (see also Buro \cite{Bur93})
showed that, if a symbol has relative
frequency $p$, then any Huffman code assigns it a codeword of length at most
\(\lfloor \log_\phi (1 / p)\rfloor\), where \(\phi = (1+\sqrt{5})/2 \approx 
1.618\) is the golden ratio, and thus $\lmax$ is at most 
\(\lfloor\log_\phi (1 / \pmin)\rfloor\), where 
$\pmin$ is the smallest relative frequency in $P$. Note also that, since
$\pmin \ge 1/N$, it must hold $\lmax \le \log_\phi N$, therefore the
canonical code can be stored in $\Oh{n\log\log N}$ bits.

Alternatively, one can enforce a value for $\lmax$ (which must be at
least $\lceil \lg n \rceil$) and pay a price in 
terms of average codeword length. The same bound above \cite{KN76} hints at a
way to achieve any desired $\lmax$ value: artificially increase the
frequency of the least frequent symbols until the new $\pmin$ value is 
over $\phi^{-\lmax}$, and then an optimal prefix code built on the new 
frequencies will hold the given maximum code length.
Another simple technique (see, e.g., \cite{BN09}, where it was used for
Hu-Tucker codes) is to start with an optimal prefix code, and then
spot all the highest nodes in the code tree with depth $\lmax-d$ and more
than $2^d$ leaves, for any $d$. Then the subtrees of the parents of those nodes
are made perfectly balanced. A more sophisticated technique, by Milidi\'u and
Laber \cite{ML01}, yields a performance guarantee. 
It first builds a Huffman tree $T_1$, then removes all the subtrees rooted 
at depth greater than $\lmax$, builds a complete binary tree $T_2$ of height 
$h$ whose leaves are those removed from $T_1$, finds the node $v \in T_1$ at 
depth \(\lmax - h - 1\) whose subtree $T_3$'s leaves correspond to the symbols 
with minimum total probability, and finally replaces $v$ by a new node whose 
subtrees are $T_2$ and $T_3$. They
show that the resulting average code length is at most $\LL(P) +1/ \phi^{\lmax - \lceil \lg (n + \lceil \lg n \rceil - \lmax)\rceil - 1}$.

All these approximations require $\Oh{n}$ time plus the time to build the
Huffman tree. A technique to obtain the optimal length-restricted prefix code,
by Larmore and Hirshberg \cite{LH90}, performs in $\Oh{n\,\lmax}$ time by
reducing the construction to a binary version of the coin-collector's problem. 

The above is an example of how an additive increase in the average codeword
length may yield less space to represent the code itself. Another well-known
additive approximation follows from Gilbert and Moore's proof~\cite{GM59} that 
we can build an alphabetic prefix code with average codeword length less than 
$\HH(P)+2$, and indeed no more than \(\LL(P) + 1\)~\cite{Nak91,She92}. In an 
alphabetic prefix code, the lexicographic order of the codewords is the same 
as that of the source symbols, so we need to store only the code tree and not the 
assignment of codewords to symbols. Any code tree, of $n-1$ internal nodes, 
can be encoded in $4n+o(n)$ bits so that it can be navigated in constant time 
per operation~\cite{DRR12}, and thus encoding and decoding of any symbol takes 
time proportional to its codeword length. 

Multiplicative approximations have the potential of yielding codes that can be
represented within $o(n)$ bits.
Adler and Maggs~\cite{AM01} showed it generally takes more than \((9 / 40)
n^{1 / (20 c)} \lg n\) bits to store a prefix code with average codeword
length at most \(c H (P)\). Gagie~\cite{Gag06a,Gag06b,Gag08} showed that, for 
any constant \(c \geq 1\), it takes $\Oh{n^{1 / c} \log n}$ bits to store a 
prefix code with average codeword length at most \(c H (P) + 2\).  He also 
showed his upper bound is nearly optimal because, for any positive constant 
$\epsilon$, we cannot always store a prefix code with average codeword length 
at most \(c H (P) + o (\log n)\) in $\Oh{n^{1 / c - \epsilon}}$ bits. Note
that our result does not have the additive term ``$+2$'' in addition to the
multiplicative term, which is very relevant on low-entropy texts.

\section{Representing Optimal Codes} \label{sec:exact}


\begin{figure}[t]
\begin{center}
	\subfigure[]{
		\includegraphics[width=0.45\textwidth]{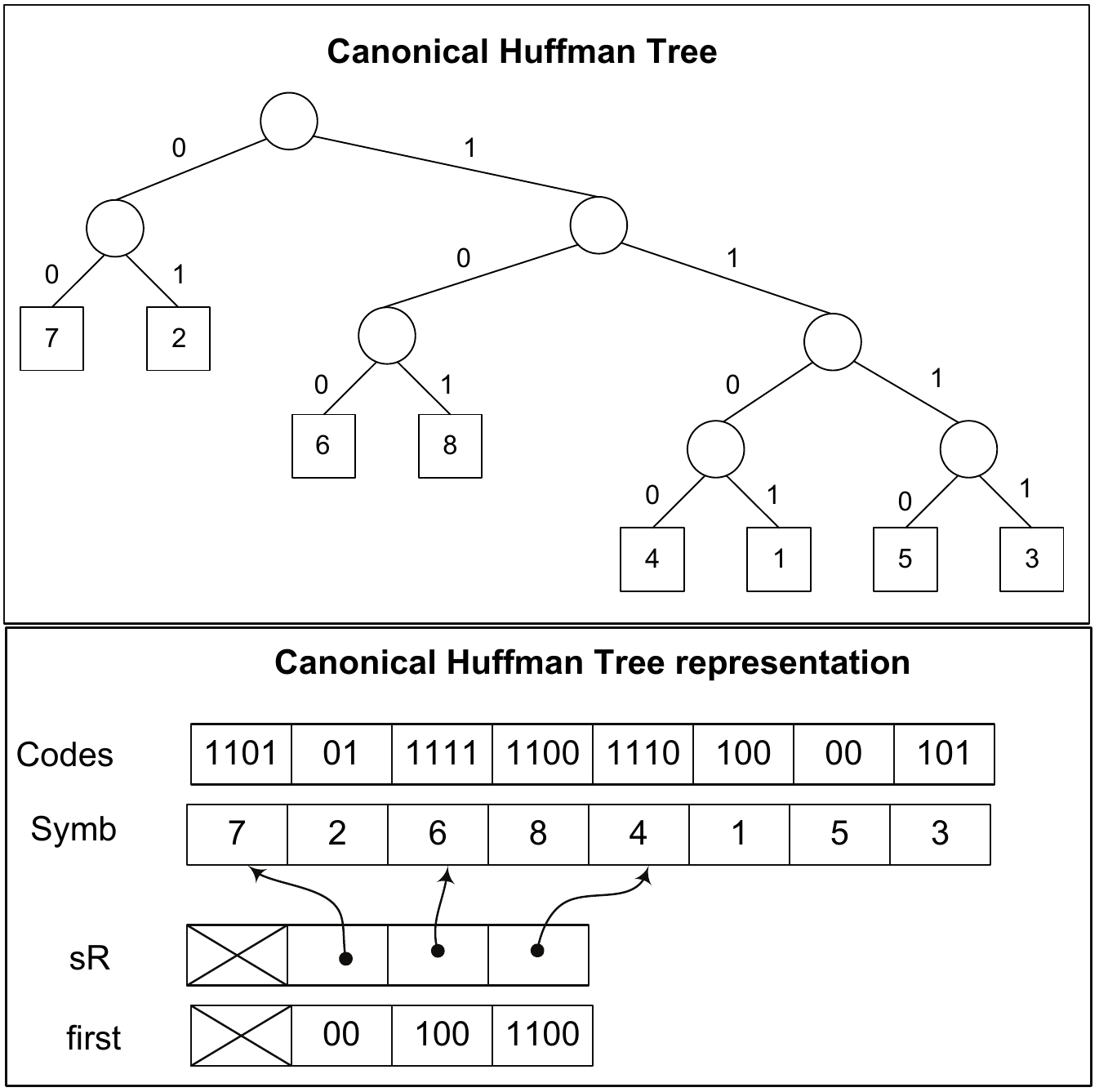}
		 \label{fig:canonical:left}
	}
	\subfigure[]{
		\vspace*{-7.5cm}
		\includegraphics[width=0.45\textwidth]{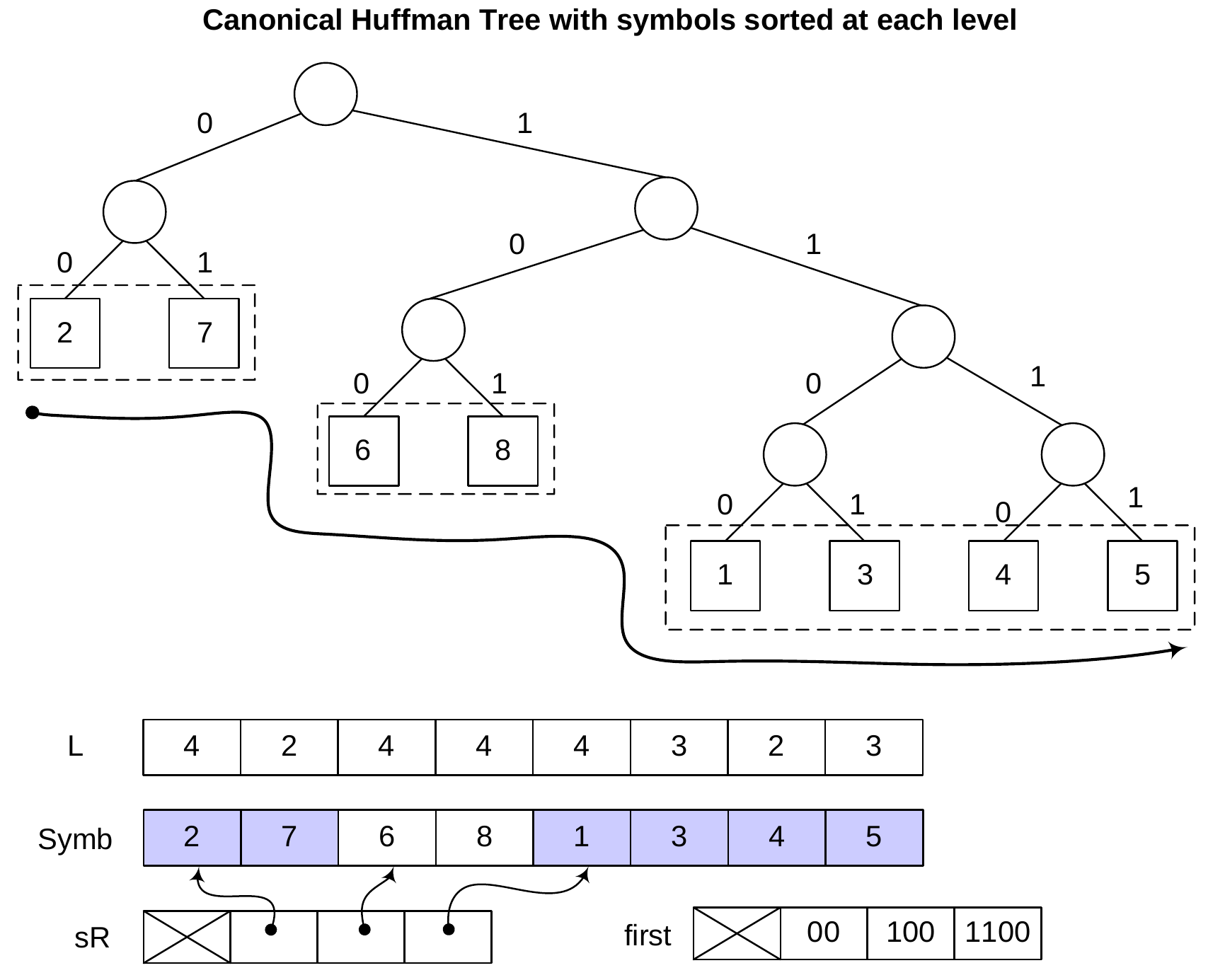}
		\label{fig:canonical:right}
	}
\end{center}
\caption{An arbitrary canonical prefix code \subref{fig:canonical:left} and the result of 
sorting the source symbols at each level \subref{fig:canonical:right}.}
\label{fig:canonical}
\end{figure}

Figure~\ref{fig:canonical:left} illustrates a canonical Huffman code.
For encoding in constant time, we can simply use an array like
$\Codes$, which stores at position $i$ the code $c_i$ of source symbol
$a_i$, using $\lmax = \Oh{\log N}$ bits for each. For decoding, the
source symbols are written in an array $\Symb$, in left-to-right order
of the leaves. This array requires $n\lg n$ bits. The access to this array is
done via two smaller arrays, which have one entry per level: $\sR[\ell]$
points to the first position of level $\ell$ in
$\Symb$, whereas $\first[\ell]$ stores the first code in level $\ell$.
The space for these two arrays is $\Oh{\lmax^2}$ bits.

Then, if we have to decode the first symbol encoded in a bitstream, we first
have to determine its length $\ell$. In our example, if the bitstream starts
with \textsf{0}, then $\ell=2$; if it starts with \textsf{10}, then $\ell=3$,
and otherwise $\ell=4$. Once the level $\ell$ is found, we read the next
$\ell$ bits of the stream in $c_i$, and decode the symbol as 
$a_i = \Symb[\sR[\ell]+c_i-\first[\ell]]$.

The problem of finding the appropriate entry in $\first$ can be recast
into a predecessor search problem \cite{GN09,KN09}. We extend all the values
$\first[\ell]$ by appending $\lmax-\ell$ bits at the end. In our
example, the values become $\mathsf{0000} = 0$, $\mathsf{1000} = 8$, and
$\mathsf{1100} = 12$. Now, we find the length $\ell$ of the next symbol by
reading the first $\lmax$ bits from the stream, interpreting it as a
binary number, and finding its predecessor value in the set. Since we have only
$\lmax=\Oh{\log N}$ numbers in the set, and each has
$\lmax=\Oh{\log N}$ bits, the predecessor search can be carried out
in constant time using fusion trees \cite{FW93} (see also Patrascu and Thorup
\cite{PT08}), within $\Oh{\lmax^2}$ bits of space.

Although the resulting structure allows constant-time encoding and decoding,
its space usage is still $\Oh{n\,\lmax}$ bits. In order to reduce it to
$\Oh{n\log\lmax}$, we will use a {\em multiary wavelet tree} data 
structure \cite{GGV03,FMMN07}. In particular, we use the version that does not
need universal tables \cite[Thm.~7]{BN12}. This structure represents a sequence
$L[1,n]$ over alphabet $[1,\lmax]$ using $n\lg\lmax +
o(n\lg\lmax)$ bits, and carries out the operations in time
$\Oh{\log\lmax/\log w}$. In our case, where $\lmax = \Oh{w}$,
the space is  $n\lg\lmax + o(n)$ bits and the time is $\Oh{1}$. The
operations supported by wavelet trees are the following: 
(1) Given $i$, retrieve $L[i]$; (2) given $i$ and $\ell \in [1,\lmax]$, 
compute $rank_\ell(L,i)$, the number of occurrences of $\ell$ in $L[1,i]$; (3) 
given $j$ and $\ell \in [1,\lmax]$, compute $select_\ell(S,j)$, the 
position in $L$ of the $j$-th occurrence of $\ell$. 

Assume that the symbols of the canonical Huffman tree are in increasing
order within each depth, as in Figure~\ref{fig:canonical:right}.%
\footnote{In fact, most previous descriptions of canonical Huffman codes assume
this increasing order, but we want to emphasize that this is essential for
our construction.}
Now, the key property is that $\Codes[i] = \first[\ell] + 
rank_\ell(L,i)-1$, where $\ell=L[i]$, which finds the code $c_i=\Codes[i]$ 
of $a_i$ in constant time. The inverse property is useful for
decoding code $c_i$ of length $\ell$: the symbol is 
$a_i = \Symb[\sR[\ell]+c_i-\first[\ell]] =
select_\ell(L,c_i-\first[\ell]+1)$. Therefore, arrays
$\Codes$, $\Symb$, and $\sR$ are not required; we can encode
and decode in constant time using just the wavelet tree of $L$ and $\first$,
plus its predecessor structure. This completes the result.

\begin{theorem} \label{thm:exact}
Let $P$ be the frequency distribution over $n$ symbols for a text of length
$N$, so that an optimal prefix code has maximum codeword length $\lmax$. Then, 
under the RAM model with computer word size $w=\Omega(\lmax)$, we can store an 
optimal prefix code using $n\lg\lmax + o(n) + \Oh{\lmax^2}$ bits, note that
$\lmax \le \log_\phi N$. Within this space, encoding and decoding any symbol 
takes $\Oh{1}$ time.
\end{theorem}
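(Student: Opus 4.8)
The plan is to assemble the pieces already developed in the running text into a single clean argument, being careful about the space and time accounting. The theorem claims three things: a space bound of $n\lg\lmax + o(n) + \Oh{\lmax^2}$ bits, the side remark $\lmax \le \log_\phi N$, and $\Oh{1}$-time encoding and decoding under the word-size assumption $w = \Omega(\lmax)$.

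First I would justify the reduction to a \emph{canonical} code: given any optimal prefix code with code lengths $\ell_1, \ldots, \ell_n$, the canonical code of Schwartz and Kallick \cite{SK64} with the same multiset of lengths is also optimal (the average length depends only on the lengths), and it has the additional structure that within each level the source symbols can be taken in increasing order, as in Figure~\ref{fig:canonical:right}. Then I would recall that such a canonical code is completely determined by the length array $L[1,n]$ together with the per-level arrays $\first[\ell]$ and $\sR[\ell]$, via the two identities already displayed: $\Codes[i] = \first[L[i]] + rank_{L[i]}(L,i) - 1$ for encoding, and $a_i = select_\ell(L, c_i - \first[\ell] + 1)$ for decoding a codeword $c_i$ of length $\ell$.

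Next comes the data structure layout and its space cost. I would store $L$ in the multiary wavelet tree of \cite[Thm.~7]{BN12}, which under $\lmax = \Oh{w}$ uses $n\lg\lmax + o(n)$ bits and supports \emph{access}, \emph{rank}, and \emph{select} in $\Oh{1}$ time; store $\first$ (padded with $\lmax - \ell$ trailing bits so that a predecessor query on the first $\lmax$ bits of the stream returns the correct level) in a fusion tree \cite{FW93,PT08}, which on $\Oh{\lmax}$ keys of $\Oh{\lmax}$ bits each occupies $\Oh{\lmax^2}$ bits and answers predecessor in $\Oh{1}$ time; and store $\sR$ in plain form, which is $\Oh{\lmax \log n} = \Oh{\lmax^2}$ bits as well since $\log n \le \lmax$ (every alphabet symbol needs a distinct codeword, forcing $\lmax \ge \lceil \lg n \rceil$). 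Summing these gives the claimed $n\lg\lmax + o(n) + \Oh{\lmax^2}$. For the side bound $\lmax \le \log_\phi N$, I would invoke Katona and Nemetz \cite{KN76} as recalled in Section~\ref{sec:related}: the least frequent symbol gets a codeword of length at most $\lfloor \log_\phi (1/\pmin) \rfloor$, and $\pmin \ge 1/N$.

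Finally I would trace through encoding and decoding to confirm the $\Oh{1}$ time. Encoding symbol $a_i$: one wavelet-tree \emph{access} to get $\ell = L[i]$, one \emph{rank} to get $rank_\ell(L,i)$, one lookup in $\first$, constant arithmetic. Decoding: read the first $\lmax$ bits of the stream into a machine word (legitimate since $w = \Omega(\lmax)$), one fusion-tree predecessor query to recover $\ell$, read the next $\ell$ bits as $c_i$, one wavelet-tree \emph{select} to get $a_i$, constant arithmetic. I expect no genuine obstacle here — the substantive ideas (canonicalization with per-level sorting, the $rank$/$select$ identities, the choice of a table-free wavelet tree so the $o(\cdot)$ term stays $o(n)$ rather than $o(n\lg\lmax)$ blowing up, and the fusion-tree predecessor trick) have all been laid out in the preceding discussion; the only point requiring a little care is the space bookkeeping, namely checking that every auxiliary structure is absorbed into either the $n\lg\lmax$ term, the $o(n)$ term, or the $\Oh{\lmax^2}$ term, and in particular that $\sR$ and $\first$ both fit in $\Oh{\lmax^2}$ using $\lg n \le \lmax$.
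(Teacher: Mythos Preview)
Your proposal is correct and follows essentially the same construction as the paper: canonical codes with per-level sorted symbols, a multiary wavelet tree over $L$ for the $rank$/$select$ identities, and a fusion tree over the padded $\first$ array for constant-time level recovery. The only cosmetic difference is that you explicitly store $\sR$, whereas the paper notes it is unnecessary once the $select$ identity replaces the $\Symb$ lookup; since $\sR$ fits in $\Oh{\lmax^2}$ bits anyway, this does not affect the bound.
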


Therefore, under mild assumptions, we can store an optimal code in 
$\Oh{n\log\log N}$ bits, with constant-time encoding and decoding operations.
In the next section we refine this result further. On the other hand, note
that Theorem~\ref{thm:exact} is also valid for nonoptimal prefix codes, as
long as they are canonical and their $\lmax$ is $\Oh{w}$.

We must warn the practice-oriented reader that Theorem~\ref{thm:exact} (as well
as those to come) must be understood as a theoretical result. 
As we will explain in Section~\ref{sec:experiments}, other structures with
worse theoretical guarantees perform better in practice than those chosen to
obtain the best theoretical results. Our engineered implementation of 
Theorem~\ref{thm:exact} reaches $\Oh{\log\log N}$, and even $\Oh{\log N}$,
decoding time. It does, indeed, use much less space than previous model 
representations, but it is also much slower.

\section{Additive Approximation} \label{sec:additive}

In this section we exchange a small additive penalty over the optimal prefix 
code for an even more space-efficient representation of the code, while
retaining constant-time encoding and decoding.

It follows from Milidi\'u and Laber's bound \cite{ML01} that, for any
$\epsilon$ with \(0 < \epsilon < 1 / 2\), there is always a prefix code with
maximum codeword length \(\lmax = \lceil \lg n\rceil + \lceil \log_\phi (1 /
\epsilon)\rceil + 1\) and average codeword length within an additive
\[
\frac{1}{\phi^{\lmax - \lceil \lg (n + \lceil \lg n \rceil - \lmax) \rceil - 1}}
~~\le~~ \frac{1}{\phi^{\lmax-\lceil\lg n\rceil-1}}
~~\leq~~ \frac{1}{\phi^{\log_\phi (1 / \epsilon)}}
~~=~~ \epsilon
\]
of the minimum $\LL(P)$. The techniques described in Section~\ref{sec:exact}
give a way to store such a code in $n\lg\lmax+\Oh{n+\lmax^2}$ bits, with
constant-time encoding and decoding. In order to reduce the space, we note
that our wavelet tree representation \cite[Thm.~7]{BN12} in fact uses 
$n\HH_0(L) + o(n)$ bits when $\lmax = \Oh{w}$. Here $\HH_0(L)$ denotes
the empirical zero-order entropy of $L$. Then we obtain the following result.

\begin{theorem} \label{thm:additive}
Let $\LL(P)$ be the optimal average codeword length for a distribution $P$
over $n$ symbols. Then, for any $0<\epsilon<1/2$, under the RAM model with 
computer word size $w=\Omega(\log n)$, we can store a prefix code over $P$ 
with average codeword length at most $\LL(P)+\epsilon$,
using $n\lg\lg(1 / \epsilon) + \Oh{n}$ 
bits, such that encoding and decoding any symbol takes $\Oh{1}$ time.
\end{theorem}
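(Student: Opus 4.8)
The plan is to combine the length-limited code from Milidi\'u and Laber's bound with the entropy-bounded wavelet tree, and then show that the zero-order entropy of the sequence $L$ of codeword lengths is $\Oh{\lg\lg(1/\epsilon)}$, so that $n\HH_0(L)+o(n)$ collapses to the claimed bound. First I would fix $\epsilon$ and invoke the discussion preceding the theorem: there is a prefix code with maximum codeword length $\lmax = \lceil\lg n\rceil + \lceil\log_\phi(1/\epsilon)\rceil + 1$ and average codeword length at most $\LL(P)+\epsilon$. Make it canonical and sort symbols within each level, exactly as in Section~\ref{sec:exact}, and let $L[1,n]$ be the sequence of codeword lengths. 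By Theorem~\ref{thm:exact} (and the observation that the wavelet tree of \cite[Thm.~7]{BN12} uses $n\HH_0(L)+o(n)$ bits when $\lmax=\Oh{w}$), together with the predecessor structure on $\first$ of size $\Oh{\lmax^2}$ bits, we can encode and decode in $\Oh{1}$ time using $n\HH_0(L)+o(n)+\Oh{\lmax^2}$ bits.

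The main work is bounding $\HH_0(L)$. The key point is that $L$ takes values in the range $[\,1,\lmax\,]$, but the \emph{short} codeword lengths — those at most, say, $\lceil\lg n\rceil - 1$ — are irrelevant because there are fewer than $2^{\lceil\lg n\rceil-1} \le n$ codewords of such lengths in total (a prefix code has at most $2^{\ell}$ codewords of length $\ell$, and summing the geometric series over all these lengths is still $O(n)$, in fact $<n$), so collectively they contribute $\Oh{n}$ bits to $n\HH_0(L)$ regardless of how their entropy is distributed; I would bound their total entropy contribution crudely by $\Oh{n}$ using the fact that for a value appearing $n_\ell$ times the term is $n_\ell\lg(n/n_\ell)\le n_\ell\lg n$ but summing $n_\ell\lg n$ over short lengths is wasteful — better: just bound the number of distinct short lengths times the max per-symbol cost, i.e. the short part of the alphabet of $L$ has size $\Oh{\lg n}$ but only $\Oh{n/?}$... cleanest is: the contribution of any fixed symbol value $v$ to $n\HH_0(L)=\sum_v n_v\lg(n/n_v)$ is at most $n_v\lg(n/n_v)$, and $\sum_{v\le\lceil\lg n\rceil-1} n_v \le n$, and $x\mapsto x\lg(n/x)$ is concave, so over those $\Oh{\lg n}$ values the total is $\Oh{n\lg\lg n / \lg n}=o(n)$... this needs care. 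The honest route: the number of distinct values in $L$ that are at most $\lceil\lg n\rceil$ is $\Oh{\lg n}$ and they carry at most $n$ symbols, so by concavity their entropy contribution is at most $n\lg(\Oh{\lg n})=\Oh{n\lg\lg n}$, which is too big. So instead I restrict attention to the \emph{long} lengths.

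The decisive observation is that the lengths strictly greater than $\lceil\lg n\rceil$ span a range of only $\lmax-\lceil\lg n\rceil = \Oh{\lg(1/\epsilon)}$ consecutive integer values. I would split $\HH_0(L) = \sum_v \frac{n_v}{n}\lg\frac{n}{n_v}$ into the contribution of the $\Oh{\lg(1/\epsilon)}$ long values and the $\Oh{\lg n}$ short values. For the short values, use the crude bound that at most $2^{\lceil\lg n\rceil-1}<n$ symbols have short codewords, distributed among $\Oh{\lg n}$ values; but actually I only need that the \emph{number of distinct} short lengths used is $\Oh{\lg\lg(1/\epsilon)}$ as well — which is false in general. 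Hence the correct argument is: over \emph{all} $n$ symbols, $\HH_0(L)\le\lg(\text{number of distinct values of }L)$ is the wrong inequality; rather $\HH_0(L)\le\lg|\mathrm{alphabet}(L)|$, and here $|\mathrm{alphabet}(L)|\le\lmax=\Oh{\lg n}$, giving only $\Oh{\lg\lg n}$. To get $\lg\lg(1/\epsilon)$ I must actually use that \emph{most symbols have long codewords}: at most $n$ symbols... no. I expect the genuine key step to be showing that the number of symbols with \emph{short} codeword length (at most $\lceil\lg n\rceil$) is at most some $m\le n$, so those contribute $\le m\lg(n/1)$... The real resolution (and the step I expect to be the main obstacle to state cleanly) is: partition $[1,n]$ by codeword length; the $\Oh{\lg n}$ short-length classes together hold fewer than $n$ symbols and contribute $\Oh{n}$ bits to $n\HH_0(L)$ by the bound $\sum n_v\lg(n/n_v)\le (\sum n_v)\lg n$ only if $\sum n_v=\Oh{n/\lg n}$, which need not hold — so one must instead note that a canonical Huffman-derived code has at most one length $\ell$ with $n_\ell$ close to $n$, concretely the lengths form a tree so $\sum_\ell n_\ell 2^{-\ell}\le 1$, forcing $n_\ell\le 2^\ell$; then $n_v\lg(n/n_v)\le 2^v\lg(n/1)=2^v\lg n$ and $\sum_{v\le\lceil\lg n\rceil}2^v\lg n=\Oh{n\lg n}$ — still too big by a $\lg n$ factor. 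I therefore expect the intended proof to invoke a sharper, known bound: $n\HH_0(L)\le \Oh{n\lg\lg(1/\epsilon)}+\Oh{n}$ follows because the $\Oh{\lg(1/\epsilon)}$ long values contribute $\le n\lg\Oh{\lg(1/\epsilon)}=\Oh{n\lg\lg(1/\epsilon)}$, and the short values, being at most $\Oh{\lg n}$ in number, contribute $\Oh{n}$ after one checks (via $n_\ell\le 2^\ell$ and $\lg(n/n_\ell)\ge \lg n - \ell$) that $\sum_{\ell\le\lceil\lg n\rceil}n_\ell\lg(n/n_\ell)\le\sum_\ell 2^\ell(\lceil\lg n\rceil-\ell)=\Oh{n}$ — this sum \emph{is} $\Oh{n}$ since $\sum_{j\ge 0}(j)2^{-j}$ converges after substituting $\ell=\lceil\lg n\rceil-j$. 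That last computation, $\sum_{j\ge0} j\,2^{\lceil\lg n\rceil-j}=\Oh{n}$, is the crux and it does close. Finally, $\Oh{\lmax^2}=\Oh{(\lg n+\lg(1/\epsilon))^2}=\Oh{\lg^2 n}+\Oh{\lg^2(1/\epsilon)}$, which is absorbed into the $\Oh{n}$ term under the stated word-size assumption $w=\Omega(\lg n)$ provided $\lg(1/\epsilon)=\Oh{n}$ (true since $\epsilon$ can be taken $\ge 2^{-n}$ without loss, as smaller $\epsilon$ gives an exact code anyway), yielding the claimed $n\lg\lg(1/\epsilon)+\Oh{n}$ bits with $\Oh{1}$ encode/decode, which completes the proof.
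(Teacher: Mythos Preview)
Your final argument is essentially the paper's: split the codeword lengths into ``short'' ($\le\lceil\lg n\rceil$) and ``long'' ($>\lceil\lg n\rceil$), bound the long part by its alphabet size $\Oh{\lg(1/\epsilon)}$, and bound the short part by $\Oh{n}$ using the Kraft constraint $n_\ell\le 2^\ell$. The paper phrases the split as two interleaved subsequences $L_1,L_2$ with $n\HH_0(L)\le n_1\HH_0(L_1)+n_2\HH_0(L_2)+n$, but the content is the same.

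There is one genuine (if minor) gap in your short-length step. You write $n_\ell\lg(n/n_\ell)\le 2^\ell(\lceil\lg n\rceil-\ell)$ and justify it by ``$n_\ell\le 2^\ell$ and $\lg(n/n_\ell)\ge\lg n-\ell$''. That second inequality points the wrong way: it gives a \emph{lower} bound on the factor $\lg(n/n_\ell)$, so multiplying an upper bound on $n_\ell$ by it proves nothing. What you actually need is that $x\mapsto x\lg(n/x)$ is increasing on $[0,n/e]$, so that $n_\ell\le 2^\ell$ implies $n_\ell\lg(n/n_\ell)\le 2^\ell\lg(n/2^\ell)$ whenever $2^\ell\le n/e$; for the $\Oh{1}$ topmost levels where $2^\ell>n/e$ you bound each term separately by the global maximum $n/(e\ln 2)=\Oh{n}$. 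After that patch your sum $\sum_{j\ge 0}j\,2^{\lceil\lg n\rceil-j}=\Oh{n}$ goes through. The paper avoids this edge case more cleanly: it first \emph{pads} $L_1$ so that each $\ell\le\lceil\lg n\rceil$ occurs exactly $2^\ell$ times, shows via the increasing function $f(x)=(x+1)\lg(x+1)-x\lg x$ that padding cannot decrease $n_1\HH_0(L_1)$, and then the direct computation on the padded sequence is immediate and uniform over all levels.

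Your treatment of the $\Oh{\lmax^2}$ term (handling tiny $\epsilon$ by noting that below $\epsilon\approx\phi^{-n}$ the length restriction is vacuous) is actually more careful than the paper, which simply asserts $\lmax=\Oh{\log n}$.
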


\begin{proof}
Our structure uses $n\HH_0(L) + o(n) + \Oh{\lmax^2}$ bits, which is
$n\HH_0(L) + o(n)$ because $\lmax = \Oh{\log n}$. To complete the proof it is
sufficient to show that 
\(\HH_0 (S) \le \lg\lg(1/\epsilon) + \Oh{1}\).

To see this, consider $L$ as two interleaved subsequences, $L_1$ and $L_2$, of
length $n_1$ and $n_2$, with $L_1$ containing those lengths \(\le \lceil \lg n
\rceil\) and $L_2$ containing those greater. Thus $n\HH_0(L) \le n_1\HH_0(L_1) +
n_2\HH_0(L_2) + n$ (from an obvious encoding of $L$ using $L_1$, $L_2$, and a
bitmap).

Let us call $\occ{\ell}{L_1}$ the number of occurrences of symbol $\ell$ in
$L_1$. Since there are at most $2^\ell$ codewords of length $\ell$, assume we
complete $L_1$ with spurious symbols so that it has exactly $2^\ell$
occurrences of symbol $\ell$. This completion cannot decrease $n_1\HH_0(L_1) =
\sum_{\ell=1}^{\lceil\lg n\rceil} \occ{\ell}{L_1} \lg
\frac{n_1}{\socc{\ell}{L_1}}$, as increasing some $\occ{\ell}{L_1}$ to
$\occ{\ell}{L_1}+1$ produces a difference of $f(n_1)-f(\occ{\ell}{L_1})\ge 0$,
where $f(x) = (x+1)\lg(x+1)-x\lg x$ is increasing. Hence we can assume $L_1$
contains exactly $2^\ell$ occurrences of symbol $1 \le \ell \le \lceil \lg n
\rceil$; straightforward calculation then shows \(n_1 \HH_0 (L_1) = \Oh{n_1}\).

On the other hand, $L_2$ contains at most \(\lmax - \lceil \lg n \rceil\)
distinct values, so \(\HH_0 (L_2) \leq \lg (\lmax - \lceil \lg n \rceil)\), unless
\(\lmax = \lceil \lg n \rceil\), in which case $L_2$ is empty and \(n_2 \HH_0 (L_2)
= 0\).  Thus $n_2 \HH_0(L_2) \le n_2 \lg (\lceil \log_\phi(1/\epsilon)\rceil+1)
= n_2 \lg\lg (1/\epsilon)+\Oh{n_2}$.
Combining both bounds, we get $\HH_0(L) \le \lg\lg(1/\epsilon) + \Oh{1}$
and the theorem holds.
\end{proof}

In other words, under mild assumptions, we can store a code using
$\Oh{n\log\log(1/\epsilon)}$ bits at the price of increasing the average
codeword length by $\epsilon$, and in addition have constant-time encoding and
decoding. For constant $\epsilon$, this means that the code uses just $\Oh{n}$
bits at the price of an arbitrarily small constant additive penalty over the
shortest possible prefix code. Figure \ref{fig:add} shows an example. 
Note that the same reasoning of this
proof, applied over the encoding of Theorem~\ref{thm:exact}, yields a refined
upper bound.

\begin{corollary} \label{cor:exact}
Let $P$ be the frequency distribution of $n$ symbols for a text of length $N$.
Then, under the RAM model with computer word size $w=\Omega(\log N)$, 
we can store an optimal prefix code for $P$ using 
$n\lg\lg(N/n) + \Oh{n+\log^2 N}$ bits, while
encoding and decoding any symbol in $\Oh{1}$ time.
\end{corollary}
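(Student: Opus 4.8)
The plan is to mimic the proof of Theorem~\ref{thm:additive}, but starting from the encoding of Theorem~\ref{thm:exact} applied to an \emph{optimal} canonical prefix code rather than a length-limited one. Recall from Theorem~\ref{thm:exact} that this encoding uses $n\lg\lmax + o(n) + \Oh{\lmax^2}$ bits, and that our wavelet-tree representation \cite[Thm.~7]{BN12} in fact achieves $n\HH_0(L) + o(n) + \Oh{\lmax^2}$ bits when $\lmax = \Oh{w}$, where $L[1,n]$ is the sequence of codeword lengths. By the Katona--Nemetz bound \cite{KN76} we have $\lmax \le \log_\phi(1/\pmin) \le \log_\phi N$, so the $\Oh{\lmax^2}$ term is $\Oh{\log^2 N}$ and, under $w = \Omega(\log N)$, the time bounds of Theorem~\ref{thm:exact} hold. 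It therefore suffices to show $\HH_0(L) \le \lg\lg(N/n) + \Oh{1}$.

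First I would split $L$ into two interleaved subsequences exactly as in Theorem~\ref{thm:additive}: $L_1$ collecting the lengths $\le \lceil\lg n\rceil$ and $L_2$ collecting the larger ones, of lengths $n_1$ and $n_2$, so that $n\HH_0(L) \le n_1\HH_0(L_1) + n_2\HH_0(L_2) + n$. The analysis of $L_1$ is identical to the one in the proof of Theorem~\ref{thm:additive}: by padding with spurious symbols until there are exactly $2^\ell$ occurrences of each length $\ell \le \lceil\lg n\rceil$ — which cannot decrease $n_1\HH_0(L_1)$ since $f(x)=(x+1)\lg(x+1)-x\lg x$ is increasing — a direct calculation gives $n_1\HH_0(L_1) = \Oh{n_1}$.

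The only new point is the bound on $L_2$. Here $L_2$ contains at most $\lmax - \lceil\lg n\rceil$ distinct values (and is empty if $\lmax = \lceil\lg n\rceil$, in which case $n_2\HH_0(L_2)=0$), so $\HH_0(L_2) \le \lg(\lmax - \lceil\lg n\rceil)$. Using $\lmax \le \log_\phi(1/\pmin)$ and $\pmin \ge n/N$ — which holds because the least frequent symbol has relative frequency at least $1/N$, but more usefully the \emph{average} frequency is $1/n$, so in fact I need to be a little careful here: $\pmin \le 1/n$ trivially, so $1/\pmin \ge n$ and $\log_\phi(1/\pmin) \ge \log_\phi n \approx 1.44\lg n$, which is consistent with $\lmax \ge \lceil\lg n\rceil$ but does not directly give an $N/n$ term. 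The right estimate is $\lmax - \lceil\lg n\rceil \le \log_\phi(1/\pmin) - \lg n$; writing $1/\pmin = (1/(n\pmin))\cdot n$ and using $\log_\phi x = \lg x/\lg\phi$, one gets $\lmax - \lceil\lg n\rceil \le \frac{1}{\lg\phi}\lg(1/\pmin) - \lg n \le \frac{1}{\lg\phi}\lg(N/n) + \Oh{1}$, since $\pmin \ge 1/N$ gives $1/(n\pmin) \le N/n$ only after observing $1/\pmin \le N$, hence $1/\pmin \le (N/n)\cdot n$ — wait, that's $1/\pmin \le N$, so $1/\pmin = \Theta(1/\pmin)$ and $\lg(1/\pmin) - \lg n \le \lg N - \lg n = \lg(N/n)$ whenever $\lg(1/\pmin) \le \lg N$, which is exactly $\pmin \ge 1/N$. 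Therefore $\lmax - \lceil\lg n\rceil \le \frac{1}{\lg\phi}\lg(N/n) + \Oh{1} = \Oh{\lg(N/n)}$, so $\HH_0(L_2) \le \lg\lg(N/n) + \Oh{1}$ and $n_2\HH_0(L_2) \le n_2\lg\lg(N/n) + \Oh{n_2}$. Combining the two bounds with the extra $+n$ term, $n\HH_0(L) \le n\lg\lg(N/n) + \Oh{n}$, which together with the $\Oh{\log^2 N}$ overhead and the $\Oh{1}$ time bounds from Theorem~\ref{thm:exact} proves the corollary.

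The main obstacle is precisely the bookkeeping in the last step: one must push the Katona--Nemetz bound through carefully to see that the number of \emph{distinct long lengths} — not the long lengths themselves — is governed by $\log_\phi(N/n)$ rather than $\log_\phi N$, exploiting that the first $\lceil\lg n\rceil$ levels are "free" (handled by $L_1$) and that $\pmin$ cannot exceed $1/n$. Everything else is a transcription of the Theorem~\ref{thm:additive} argument with $\lceil\log_\phi(1/\epsilon)\rceil+1$ replaced by $\lmax - \lceil\lg n\rceil$.
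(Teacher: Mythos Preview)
Your argument has a genuine arithmetic gap in the $L_2$ step. With the threshold at $\lceil\lg n\rceil$ and $\lmax \le \log_\phi N = \frac{1}{\lg\phi}\lg N$, the number of distinct lengths in $L_2$ is at most $\frac{1}{\lg\phi}\lg N - \lceil\lg n\rceil$. You correctly observe $\lg(1/\pmin)-\lg n \le \lg(N/n)$, but the quantity you actually need is $\frac{1}{\lg\phi}\lg(1/\pmin) - \lg n$, and factoring gives
\[
\frac{1}{\lg\phi}\lg N - \lg n \;=\; \frac{1}{\lg\phi}\lg(N/n) \;+\; \Bigl(\tfrac{1}{\lg\phi}-1\Bigr)\lg n,
\]
with $\tfrac{1}{\lg\phi}-1 \approx 0.44 > 0$. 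The extra $\Theta(\lg n)$ term is \emph{not} $\Oh{1}$, so your claimed bound $\lmax - \lceil\lg n\rceil \le \frac{1}{\lg\phi}\lg(N/n)+\Oh{1}$ is false in general (take, e.g., $N = 2n$). With this split you only get $\HH_0(L_2) \le \lg\lg N + \Oh{1}$, not $\lg\lg(N/n)+\Oh{1}$, and the corollary does not follow.

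The paper's proof sidesteps this by moving the cut point from $\lceil\lg n\rceil$ to $\lceil\log_\phi n\rceil$. Then both terms live in base $\phi$ and one obtains directly
\[
\lmax - \lceil\log_\phi n\rceil \;\le\; \log_\phi N - \log_\phi n \;=\; \log_\phi(N/n),
\]
hence $\HH_0(L_2) \le \lg\log_\phi(N/n) = \lg\lg(N/n)+\Oh{1}$, while the $L_1$ part is asserted to go through with the same padding argument at the larger threshold. In short, your closing remark that one must ``push the Katona--Nemetz bound through carefully'' was on target, but the required care is to change the \emph{threshold} so the bases match, not to massage the subtraction; with the base-2 threshold the mismatch leaves an unavoidable $\Theta(\lg n)$ residue.
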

\begin{proof}
Proceed as in the proof of Theorem~\ref{thm:additive}, using that
$\lmax \le \log_\phi N$ and putting inside $L_1$ the lengths up to
$\lceil \log_\phi n \rceil$. Then $n_1\HH(L_1) = \Oh{n_1}$ and
$n_2\HH(L_2) \le \lg\lg(N/n)+\Oh{n_2}$.
\end{proof}

\begin{figure}[p]
\begin{center}
\subfigure[]{
	\includegraphics[width=0.5\textwidth]{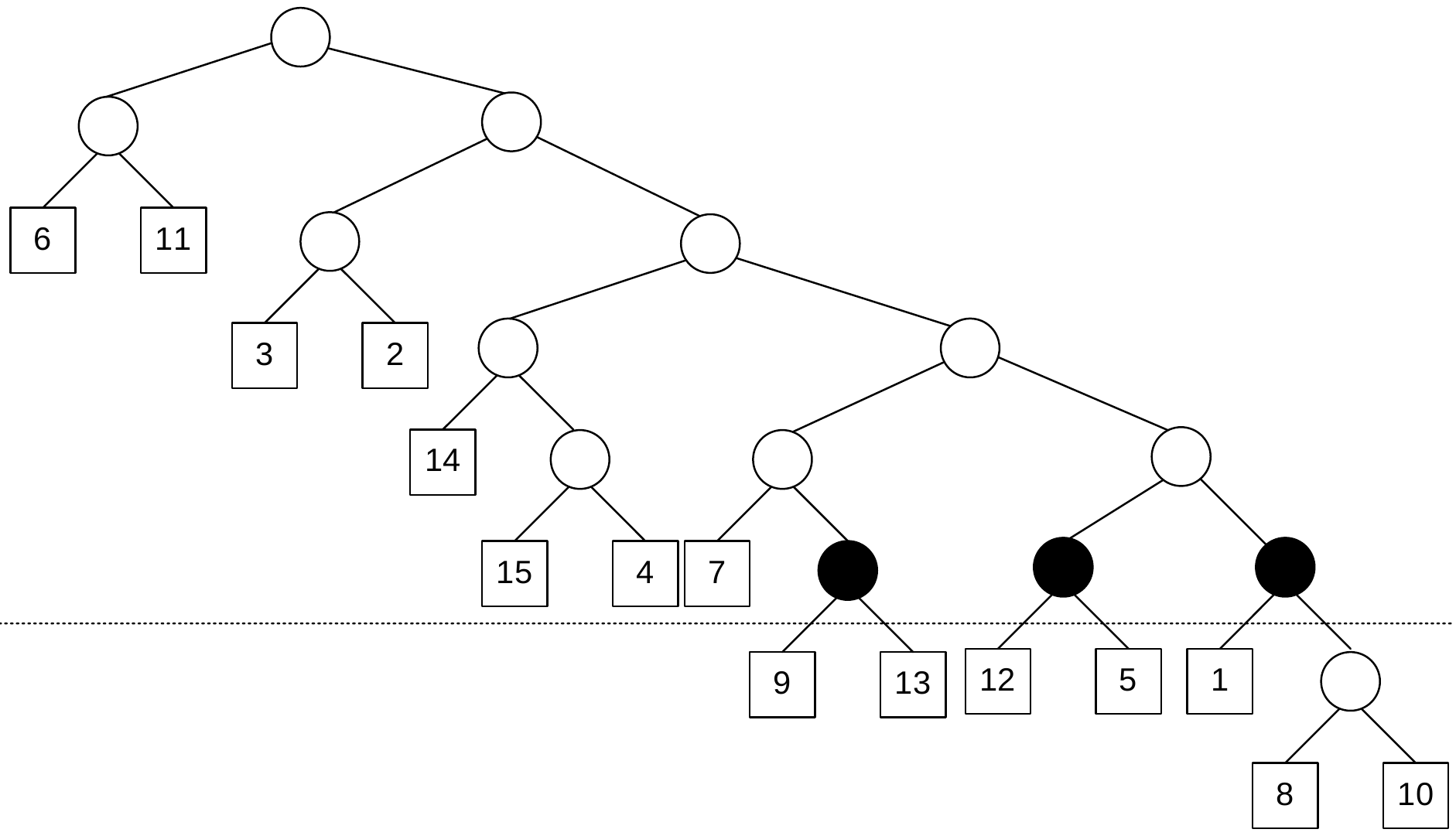}
	\label{fig:add:top}
}
\subfigure[]{
	\includegraphics[width=0.5\textwidth]{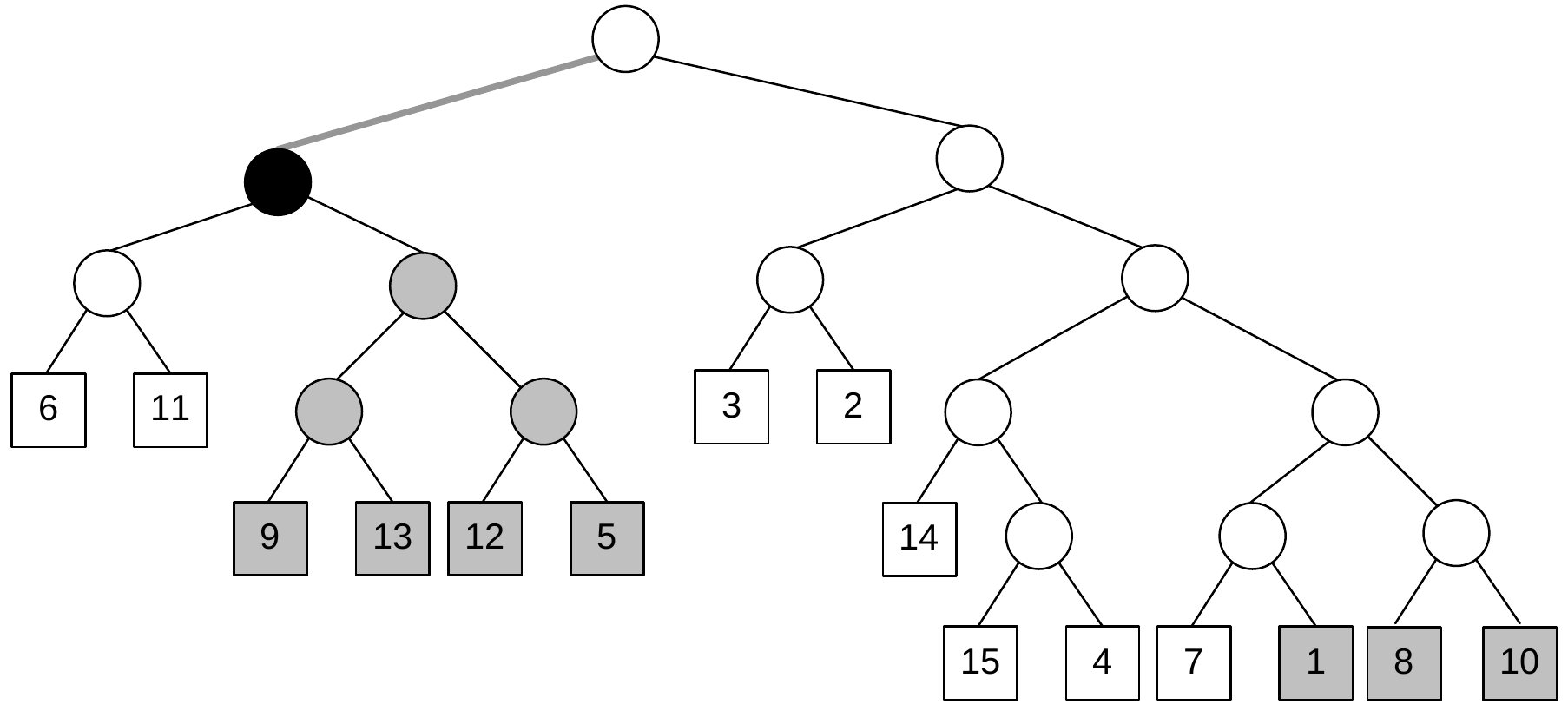}
	\label{fig:add:middle}
}
\subfigure[]{
	\includegraphics[width=0.5\textwidth]{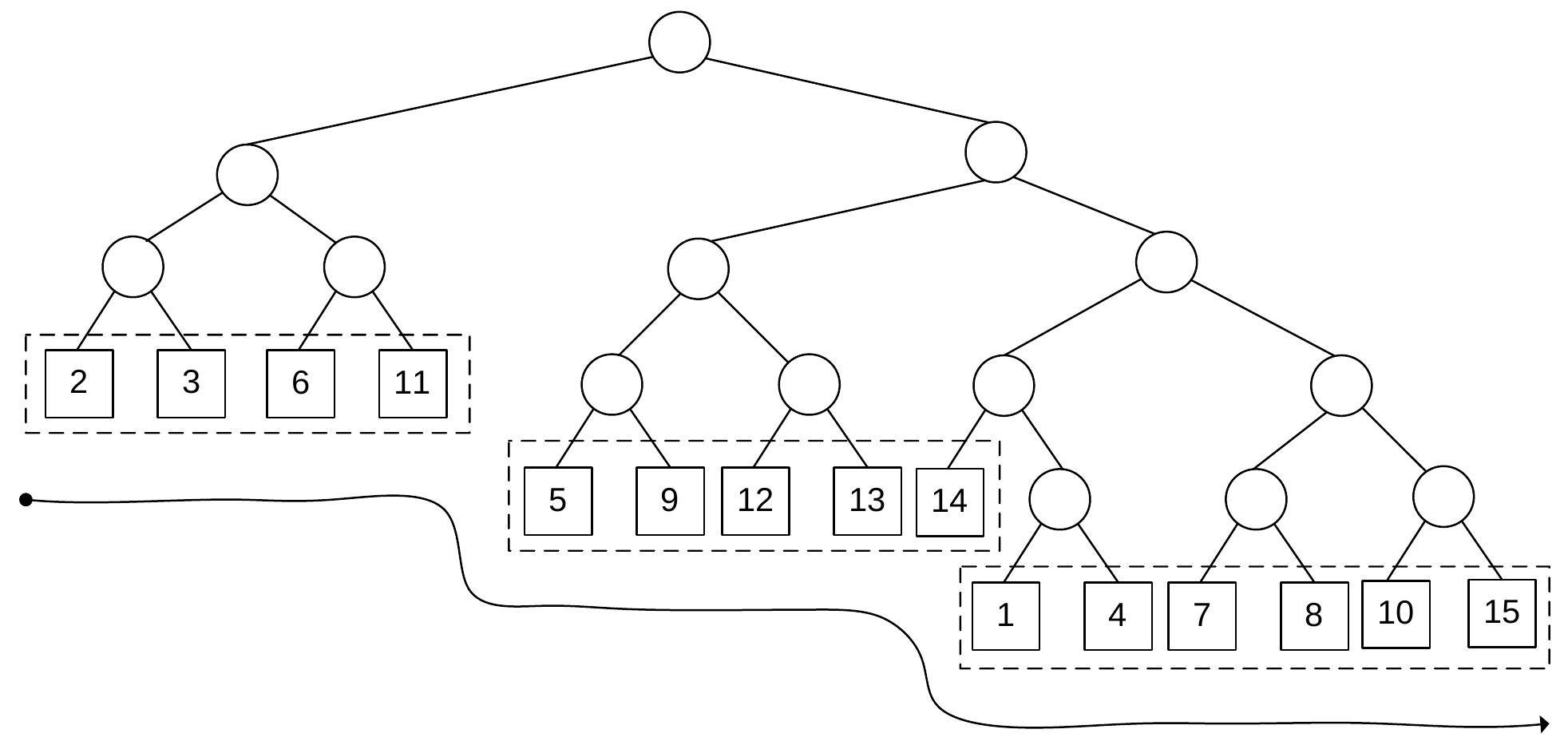}
	\label{fig:add:bottom}
}
\end{center}

\caption{An example of Milidi\'u and Laber's algorithm \cite{ML01}.
In \subref{fig:add:top}, a canonical Huffman tree. We set $l_{max}=5$ and remove all
the symbols below that level (marked with the dotted line), 
which yields three empty nodes (marked as black circles in the top tree). 
In \subref{fig:add:middle}, those black circles are replaced by the deepest symbols below
level $l_{max}$: 1, 8, and 10. The other symbols below $\lmax$,
9, 13, 12 and 5, 
form a balanced binary tree that is hung from a new node created as the left
child of the root (in black in the middle tree). 
The former left child of the root
becomes the left child of this new node. Finally, in \subref{fig:add:bottom}, we transform
the middle tree into its cannonical form, but sorting those symbols belonging to the same level in 
increasing order.}
\label{fig:add}
\end{figure}

\section{Multiplicative Approximation} \label{sec:multiplicative}

In this section we obtain a multiplicative rather than an additive
approximation to the optimal prefix code, in order to achieve a sublinear-sized 
representation of the code.
We will divide the alphabet into frequent and
infrequent symbols, and store information about only the frequent ones.

Given a constant \(c > 1\), we use Milidi\'u and Laber's algorithm \cite{ML01}
to build a prefix code with maximum codeword length 
\(\lmax = \lceil \lg n \rceil + \lceil 1 / (c - 1) \rceil + 1\)
(our final codes will have length up to $\lmax+1$).
We call a symbol's codeword {\em short} if it
has length at most \(\lmax / c + 2\), and {\em long} otherwise.  
Notice there are \(S \le 2^{\lmax / c + 2} = \Oh{n^{1 / c}}\) symbols with 
short codewords.
Also, although applying Milidi\'u and Laber's algorithm may cause some
exceptions, symbols with short codewords are usually more frequent than
symbols with long ones. We will hereafter call {\em frequent}/{\em infrequent}
symbols those encoded with short/long codewords.

Note that, if we build a canonical code, all the short codewords will precede
the long ones. We first describe how to handle the frequent symbols. A perfect 
hash data structure~\cite{FKS84} $\hash$ will map the frequent symbols in 
$[1,n]$ to the interval $[1,S]$ in constant time. The reverse mapping is done 
via a plain array $\ihash[1,S]$ that stores the original symbol that 
corresponds to each mapped symbol. We use this mapping also to 
reorder the frequent symbols so that the corresponding prefix in array $\Symb$
(recall Section~\ref{sec:exact}) reads $1,2,\ldots,S$. Thanks to this, we can
encode and decode any frequent symbol using just $\first$, $\sR$, predecessor 
structures on both of them, and the tables $\hash$ and $\ihash$. 
To encode a frequent symbol $a_i$, we find it in $\hash$, obtain the mapped 
symbol $a' \in [1,S]$, find the predecessor $\sR[\ell]$ of $a'$ and then the 
code is the $\ell$-bit integer $c_i = \first[\ell]+a'-\sR[\ell]$. To decode a 
short code $c_i$, we first find its corresponding length $\ell$ using the 
predecessor structure on $\first$, then obtain its mapped code 
$a' = \sR[\ell]+c_i-\first[\ell]$, and finally the original symbol
is $_i=\ihash[a']$. Structures $\hash$ and $\ihash$ require $\Oh{n^{1/c}\log n}$
bits, whereas $\sR$ and $\first$, together with their predecessor structures,
require less, $\Oh{\log^2 n}$ bits.

\begin{figure}[p]
\begin{center}
\subfigure[]{
	\includegraphics[width=0.54\textwidth]{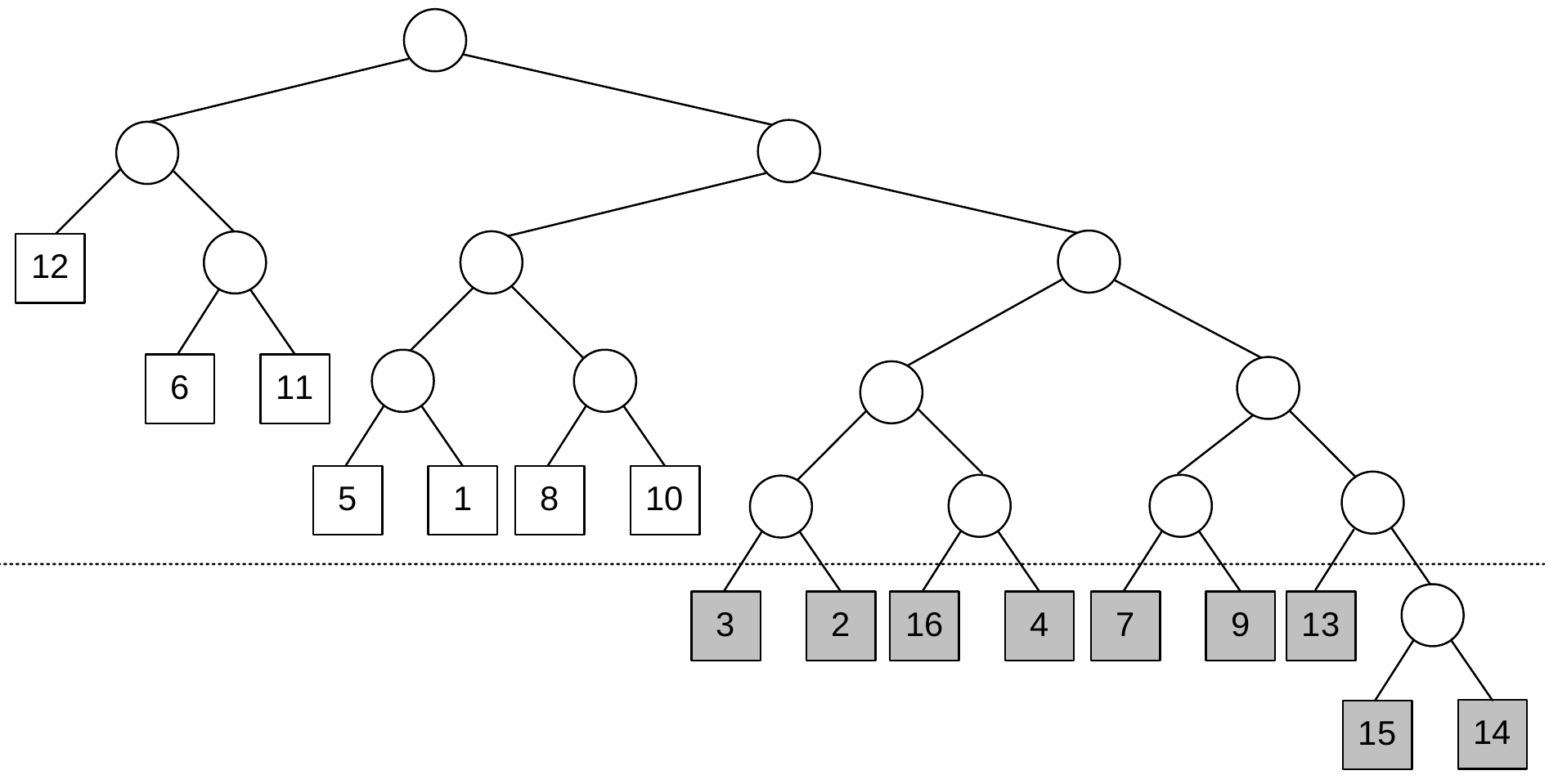}
	\label{fig:thm2:top}
}
\subfigure[]{
\includegraphics[width=0.63\textwidth]{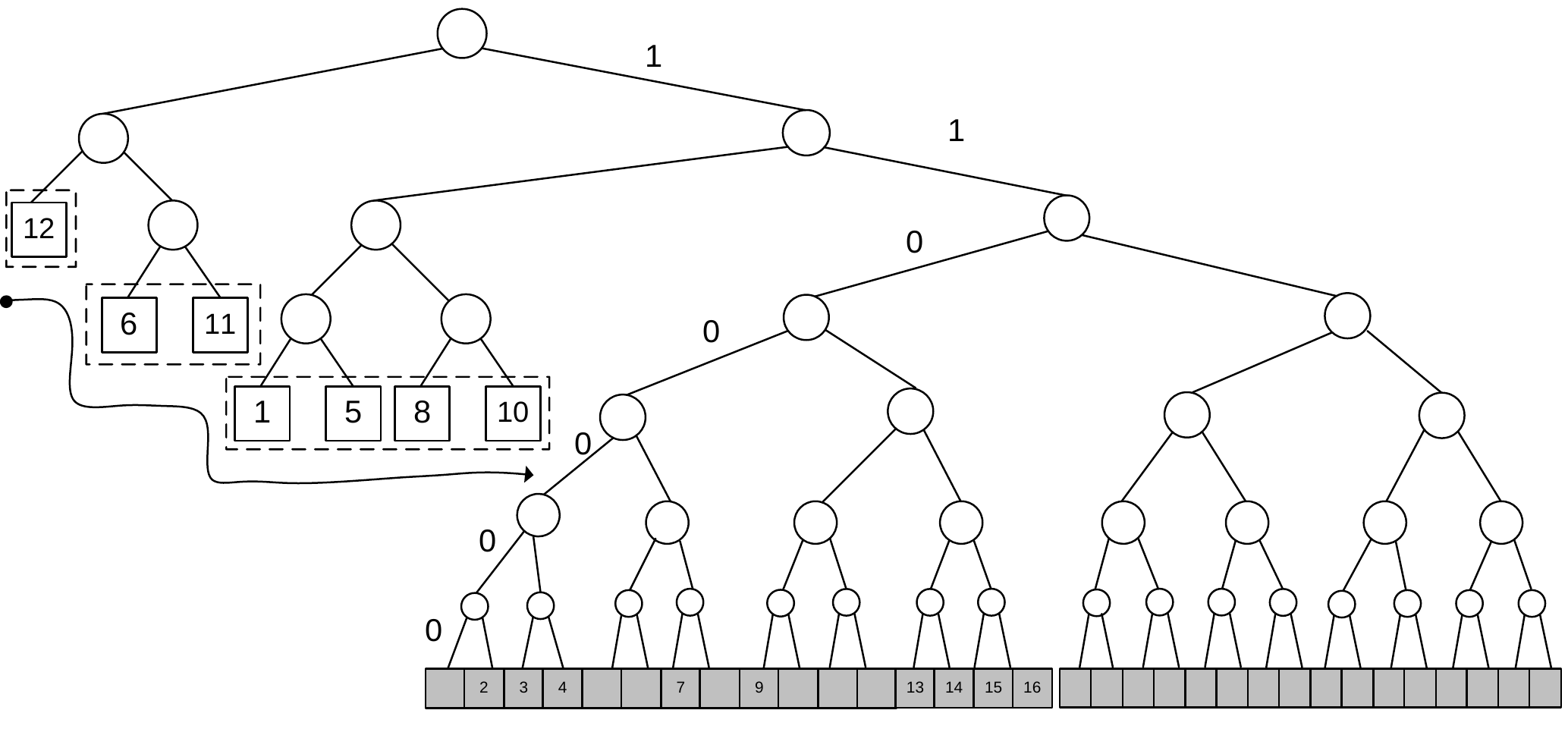}
\label{fig:thm2:middle}
}
\end{center}

\vspace*{-1.5cm}
\subfigure[]{
	\includegraphics[width=0.45\textwidth]{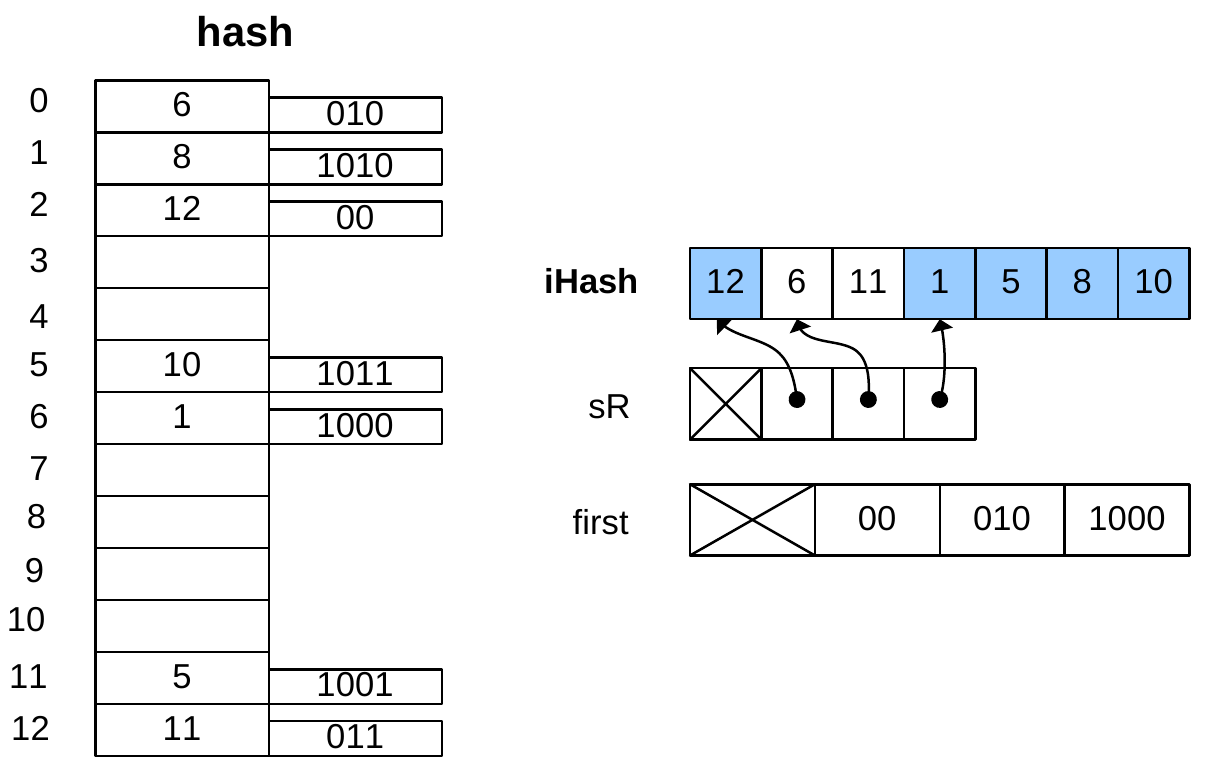}
	\label{fig:thm2:bottom}
}
\caption{An example of the multiplicative approximation, with $n=16$ and $c=3$.
The tree shown in \subref{fig:thm2:top} is the result of applying the algorithm of Milidi\'u and Laber 
to a given set of codes. Now, we set $\lmax=6$ according to our formula, 
and declare short those codewords of lengths up to $\lfloor \lmax/c \rfloor+2 = 4$. 
Short codewords (above the dashed line on top) are stored unaltered but with all symbols 
at each level sorted in increasing order \subref{fig:thm2:middle}. Long 
codewords (below the dashed line) are extended up to length $\lmax+1=7$ and 
reassigned a code according to their values in the contiguous slots of length 
7 (those in gray in the middle). Thus, given a long codeword $x$, its 
code is directly obtained as $\clong'+x-1$, where $\clong'=1100000_{2}$ is the first code of length 
$\ell_{max}+1$. In \subref{fig:thm2:bottom}, a representation of 
the hash and inverse hash to code/decode short codewords. We set the hash 
size to $m=13$ and $h(x)=(5x+7)~\textrm{mod}~m$. We store the code 
associated with each cell.}
\label{fig:thm2}
\end{figure}

The long codewords will be replaced by \emph{new} codewords, all of length
$\lmax+1$. Let $\clong$ be the first long codeword and let $\ell$ be its length. Then
we form the new codeword $\clong'$ by appending $\lmax+1-\ell$ zeros at the
end of $\clong$. The new codewords will be the $(\lmax{+}1)$-bit integers
$\clong',\clong'+1,\ldots,\clong'+n-1$. An infrequent symbol
$a_i$ will be mapped to code $\clong'+i-1$ (frequent symbols $a_i$ will leave
unused symbols $\clong'+i-1$).
Figure~\ref{fig:thm2} shows an example. 

Since \(c > 1\), we have \(n^{1 / c} < n / 2\) for sufficiently large $n$, so we can assume without loss of generality that there are fewer than $n/2$
short codewords,\footnote{If this is not the case, then $n=\Oh{1}$, so we can 
use any optimal encoding: there will be no redundancy over $\LL(P)$ and the 
asymptotic space formula for storing the code will still be valid.}
and thus there are at least $n/2$ long codewords.
Since every long codeword is replaced by at
least two new codewords, the total number of new codewords is at least $n$.
Thus there are sufficient slots to assign codewords $\clong'$ to
$\clong'+n-1$.

To encode an infrequent symbol $a_i$, we first fail to find it in table $\hash$.
Then, we assign it the $(\lmax{+}1)$-bits long codeword $\clong'+i-1$. To
decode a long codeword, we first read $\lmax+1$ bits into $c_i$. If $c_i \ge
\clong'$, then the codeword is long, and corresponds to the source symbol
$a_{c_i-\clong'+1}$. Note that we use no space to store the 
infrequent symbols. This leads to proving our result.

\begin{theorem} \label{thm:multiplicative}
Let $\LL(P)$ be the optimal average codeword length for a distribution $P$
over $n$ symbols. Then, for any constant $c>1$, under the RAM model with 
computer word size $w=\Omega(\log n)$, we can store a prefix code over $P$ 
with average codeword length at most $c\,\LL(P)$, using $\Oh{n^{1 / c} \log n}$ 
bits, such that encoding and decoding any symbol takes $\Oh{1}$ time.
\end{theorem}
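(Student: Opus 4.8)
The plan is to split the analysis into two parts: bounding the space, which is mostly bookkeeping from the construction already described, and bounding the average codeword length by $c\,\LL(P)$, which is the real content. For the space, I would simply collect the pieces: the perfect hash $\hash$ and the inverse array $\ihash$ occupy $\Oh{S\log n} = \Oh{n^{1/c}\log n}$ bits since $S \le 2^{\lmax/c+2} = \Oh{n^{1/c}}$; the arrays $\first$ and $\sR$ together with their fusion-tree predecessor structures occupy $\Oh{\lmax^2} = \Oh{\log^2 n}$ bits, which is $\Oh{n^{1/c}\log n}$ (indeed $o(n^{1/c}\log n)$ when $c<\log n / \log\log n$, and for larger $c$ one invokes the footnote that $n=\Oh{1}$); and, crucially, the infrequent symbols cost \emph{nothing} to store because their new codewords $\clong'+i-1$ encode the symbol index $i$ directly. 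Encoding and decoding are constant time: a lookup (or failed lookup) in $\hash$, a predecessor query in $\first$ or $\sR$, and $\Oh{1}$ arithmetic on $(\lmax{+}1)$-bit integers, all feasible since $\lmax+1 = \Oh{\log n} = \Oh{w}$.

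The heart of the proof is the length bound. First I would verify the code is a valid prefix code: the short codewords are exactly those of the canonical length-limited code of Milidi\'u and Laber restricted to the frequent symbols, hence prefix-free among themselves; the new long codewords, all of length $\lmax+1$ and lying in the contiguous block $[\clong',\clong'+n-1]$, are prefix-free among themselves; and since $\clong'$ is obtained by zero-padding the first long codeword $\clong$ of the original code — which was not a prefix of any short codeword — the short and long codewords remain mutually prefix-free, and Kraft's inequality is respected because we argued there are enough slots ($n/2$ long codewords each yielding $\ge 2$ new ones). Then, to bound $\sum_i p_i \ell_i'$ where $\ell_i'$ is the new length: for a frequent symbol $\ell_i' = \ell_i$ is unchanged, so it contributes at most its original term. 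For an infrequent symbol, $\ell_i' = \lmax+1$ while its \emph{original} length $\ell_i$ satisfied $\ell_i > \lmax/c + 2$, so $\ell_i' = \lmax+1 \le c(\ell_i - 2) + 1 \le c\,\ell_i$ (using $c>1$, hence $c\cdot 2 - 1 \ge 1$, i.e. $-2c+1 \le -1 \le 0$). Wait — I need $\lmax + 1 \le c\,\ell_i$; from $\ell_i \ge \lmax/c + 2 + \varepsilon$ for the smallest integer exceeding $\lmax/c+2$ one gets $c\,\ell_i \ge \lmax + 2c > \lmax + 2 > \lmax+1$, so the inequality holds with room to spare.

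Summing over all symbols, $\sum_i p_i\ell_i' \le \sum_i p_i \cdot c\,\ell_i = c\sum_i p_i\ell_i = c\,\LL_{\mathrm{ML}}(P)$, where $\LL_{\mathrm{ML}}(P)$ is the average length of the Milidi\'u--Laber length-limited code. Here one must be slightly careful: $c\,\LL_{\mathrm{ML}}(P)$ is not literally $c\,\LL(P)$, since the length-limited code is itself a (small) additive approximation. The cleanest fix is to absorb the additive slack: by the Milidi\'u--Laber bound quoted in Section~\ref{sec:additive}, with $\lmax = \lceil\lg n\rceil + \lceil 1/(c-1)\rceil + 1$ the additive penalty is at most $1/\phi^{1/(c-1)} \le 1$, and since $\LL(P)\ge 1$ always, one can rescale the short/long threshold or the constant in $\lmax$ so that the multiplicative factor absorbs this additive term — alternatively, one observes that the per-symbol inequality $\ell_i' \le c\,\ell_i$ was proved against the \emph{length-limited} lengths and then notes $\ell_i \le \ell_i^{\mathrm{Huf}} + 1 \le$ (something)$\cdot\ell_i^{\mathrm{Huf}}$ for the infrequent (long, hence $\ell_i \ge 2$) symbols while frequent symbols can only shrink. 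This absorption of the additive term into the multiplicative one is the step I expect to require the most care, and it is exactly the place where the paper's claimed improvement over Gagie's ``$+2$'' bound is won or lost; everything else is routine.
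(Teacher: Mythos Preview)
Your space and time bookkeeping is fine and matches the paper. The real issue is exactly where you flag it: bounding the average length by $c\,\LL(P)$ rather than by $c\,\LL_{\mathrm{ML}}(P)$. None of your three proposed fixes closes the gap. Absorbing the additive penalty via $\LL(P)\ge 1$ yields only $c\,\LL(P)+c\,\phi^{-\lceil 1/(c-1)\rceil}$, which is strictly worse than $c\,\LL(P)$; ``rescaling the threshold'' does not help because the additive slack in $\LL_{\mathrm{ML}}$ is incurred by \emph{frequent} symbols, whose codewords you keep verbatim; and the claim that ``frequent symbols can only shrink'' is false --- Milidi\'u--Laber may lengthen a symbol whose optimal depth is, say, $2$ to depth $3$, and $3$ can still be $\le \lmax/c+2$, making it frequent.

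The paper avoids this by comparing each $\ell_i'$ directly to the \emph{optimal} length $\ell_i^{\mathrm{opt}}$, using a per-symbol (not merely average) property of Milidi\'u--Laber: their algorithm increases any symbol's length by at most~$1$, and it does so only when $\ell_i^{\mathrm{opt}} \ge \lmax - \lceil\lg n\rceil - 1 = \lceil 1/(c-1)\rceil$. For a frequent symbol with $\ell_i' = \ell_i^{\mathrm{ML}}$: either $\ell_i^{\mathrm{ML}}=\ell_i^{\mathrm{opt}}$, or $\ell_i^{\mathrm{ML}}=\ell_i^{\mathrm{opt}}+1$ with $\ell_i^{\mathrm{opt}}\ge\lceil 1/(c-1)\rceil$, whence
\[
\frac{\ell_i'}{\ell_i^{\mathrm{opt}}} \;\le\; \frac{\lceil 1/(c-1)\rceil+1}{\lceil 1/(c-1)\rceil} \;=\; 1+\frac{1}{\lceil 1/(c-1)\rceil} \;\le\; 1+(c-1) \;=\; c.
\]
For an infrequent symbol, from $\ell_i^{\mathrm{ML}}>\lmax/c+2$ and $\ell_i^{\mathrm{ML}}\le \ell_i^{\mathrm{opt}}+1$ you get $\ell_i^{\mathrm{opt}}>\lmax/c+1$, so
\[
\frac{\ell_i'}{\ell_i^{\mathrm{opt}}} \;=\; \frac{\lmax+1}{\ell_i^{\mathrm{opt}}} \;<\; \frac{\lmax+1}{\lmax/c+1} \;<\; c.
\]
Summing $p_i\ell_i' \le c\,p_i\ell_i^{\mathrm{opt}}$ over all $i$ then gives $c\,\LL(P)$ on the nose. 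The missing ingredient in your argument is precisely this per-symbol refinement of the Milidi\'u--Laber analysis; once you invoke it, the detour through $\LL_{\mathrm{ML}}(P)$ (and the troublesome additive term) disappears.
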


\begin{proof}
Only the claimed average codeword length remains to be proved.
By analysis of the algorithm by Milidi\'u and Laber~\cite{ML01} we can see
that the codeword length of a symbol in their length-restricted code
exceeds the codeword length of the same symbol in an optimal code by at
most 1, and only when the codeword length in the optimal code is at least
\(\lmax - \lceil \log n \rceil - 1 = \lceil 1 / (c - 1) \rceil\).  Hence, the
codeword length of a frequent symbol 
exceeds the codeword length of the same symbol in an optimal code by a factor of at most \(\frac{\lceil 1 / (c - 1) \rceil + 1}{\lceil 1 / (c - 1) \rceil} \leq c\).
Every infrequent symbol is encoded with a codeword of length $\lmax+1$.
Since the codeword length of an infrequent symbol in the length-restricted
 code is more than $\lmax/c + 2$, its length in an optimal code is more than
$\lmax/c +1$.
Hence,  the codeword length of an infrequent symbol in our code is at most
 $\frac{\lmax+1}{\lmax/c+1}< c$ times greater than the codeword length of the same
symbol in an optimal code. Hence, the average codeword length for our code
is  less than $c$ times the optimal one.
\end{proof}

Again, under mild assumptions, this means that we can store a code with
average length within $c$ times the optimum, in $\Oh{n^{1/c}\log n}$ bits
and allowing constant-time encoding and decoding.

\section{Experimental Results} \label{sec:experiments}

We engineer and implement the optimal and approximate code representations
described above, obtaining complexities that are close to the theoretical
ones. We compare these with the best known alternatives to represent prefix 
codes we are aware of. Our comparisons will measure the size of the code 
representation, the encoding and decoding time and, in the case of the
approximations, the redundancy on top of $\HH(P)$.

\subsection{Implementations}
\label{sec:implem}

Our constant-time results build on two data structures. One is the multiary
wavelet tree \cite{FMMN07,BN12}. A practical study \cite{Bow10} shows that
multiary wavelet trees can be faster than binary ones, but require 
significantly more space (even with the better variants they design). To
prioritize space, we will use binary wavelet trees, which perform the 
operations in time $\Oh{\log\lmax} = \Oh{\log\log N}$.

The second constant-time data structure is the fusion tree \cite{FW93}, of
which there are no practical implementations as far as we know. Even
implementable loglogarithmic predecessor search data structures, like van Emde
Boas trees \cite{vEBKZ77}, are worse than binary search for small universes
like our range $[1,\lmax]=[1,\Oh{\log N}]$. With a simple binary search on
$\first$ we obtain a total encoding and decoding time of $\Oh{\log\log N}$,
which is sufficiently good for practical purposes. Even more, preliminary 
experiments showed that sequential search on $\first$ is about as good as 
binary search in our test collections (this is also the case with classical
representations \cite{LMSPE06}). Although sequential search costs 
$\Oh{\log N}$ time, the higher success of instruction prefetching makes it 
much faster than binary search. Thus, our experimental results use sequential search.

To achieve space close to $n\HH_0(L)$ in the wavelet tree, we use a
Huffman-shaped wavelet tree \cite{Nav14}.
The bitmaps of the wavelet tree are represented in plain form and using a space 
overhead of 37.5\% to support $rank$/$select$ operations \cite{Navjea08}. 
The total space of the wavelet tree is thus close to $1.375 \cdot n\HH_0(L)$ 
bits in practice. Besides, we enhance these bitmaps with a small additional 
index to speed up $select$ operations \cite{NPsea12.1}, which increases the
constant 1.375 to at least 1.4, or more if we want more speed. 
An earlier version 
of our work \cite{NO13} recasts this wavelet tree into a compressed permutation
representation \cite{BN13} of vector $\Symb$, which leads to a similar 
implementation.

For the additive approximation of Section~\ref{sec:additive}, we use the
same implementation as for the exact version, after modifying the code tree as
described in that section. The lower number of levels will automatically make
sequence $L$ more compressible and the wavelet tree faster.

For the multiplicative approximation of Section~\ref{sec:multiplicative}, we 
implement table $\hash$ with double hashing. The hash function is of the form 
$h(x,i) = (h_1(x)+(i-1)\cdot h_2(x))~\textrm{mod}~m$ for the $i$th trial, 
where $h_1(x)=x~\textrm{mod}~m$, 
$h_2(x)=1+(x~\textrm{mod}~(m-1))$, where $m$ is a prime number. 
 Predecessor searches over $\sR$ and $\first$ are done via binary search
since, as discussed above, theoretically better predecessor data structures
are not advantageous on this small domain.

\paragraph{Classical Huffman codes.}
As a baseline to compare with our encoding, we use the
representation of Figure~\ref{fig:canonical:left}, using $n\,\lmax$ bits
for $\Codes$, $n\lg n$ bits for $\Symb$, $\lmax^2$ bits for $\first$, and
$\lmax\lg n$ bits for $\sR$. For compression, the obvious constant-time 
solution using
$\Codes$ is the fastest one. We also implemented the fastest decompression 
strategies we are aware of, which are more sophisticated. The naive approach, 
dubbed TABLE in our experiments, consists of iteratively probing the
next $\ell$ bits from the compressed sequence, where $\ell$ is the next 
available tree depth. If the relative numeric code resulting from reading 
$\ell$ bits exceeds the number of nodes at this level, we probe the next 
level, and so on until finding the right length \cite{SK64}. 

Much research has focused on impoving upon this naive approach
\cite{MT97,MREMD03,CK85,SIEMIIPL88,HTOC95,LMSPE06}. 
For instance, one could use an additional table that takes a prefix of $b$ bits 
of the compressed sequence and tells which is the minimum code length compatible 
with that prefix. This speeds up decompression by reducing the number 
of iterations needed to find a valid code. This technique was proposed by Moffat and 
Turpin \cite{MT97} and we call it TABLE$_{S}$ in our experiments.
Alternatively, one could use a table that stores, for all the $b$-bit
prefixes, the symbols that can be directly decoded from them 
(if any) and how many bits those symbols use. Note this technique can be 
combined with TABLE$_{S}$: if no symbol can be decoded, we
use TABLE$_{S}$. In our experiments, we call TABLE$_E$ the combination of these two 
techniques.

Note that, when measuring
compression/decompression times, we will only consider the space needed for
compression/decompression (whereas our structure is a single one for both
operations).

\paragraph{Hu-Tucker codes.} 
As a representative of a suboptimal code that requires little storage space 
\cite{BNO12}, we also implement alphabetic codes, using the Hu-Tucker 
algorithm \cite{HT71,Knu73}. This algorithm takes $\Oh{n\log n}$ time and
yields the optimal alphabetic code, which guarantees an average code length
below $\HH(P)+2$. As the code is 
alphabetic, no permutation of symbols needs to be stored; the $i$th leaf of
the code tree corresponds to the $i$th source symbol. On the other hand, the
tree shape is arbitrary. We implement the code tree using succinct tree
representations, more precisely the so-called FF \cite{ACNS10}, which
efficiently supports the required navigation operations. 
This representation requires
2.37 bits per tree node, that is, $4.74n$ bits for our tree (which has $n$
leaves and $n-1$ internal nodes). FF represents general trees, so we convert
the binary code tree into a general tree using the well-known mapping
\cite{MR01}: we 
identify the left child of the code tree with the first child in the general 
tree, and the right child of the code tree with the next sibling in the 
general tree. The general tree has an extra root node whose children are the 
nodes in the rightmost path of the code tree.

With this representation, compression of symbol $c$ is carried out by starting 
from the root and descending towards the $c$th leaf. We use the number of
leaves on the left subtree to decide whether to go left or right. The
left/right decisions made in the path correspond to the code. In the general
tree, we compute the number of nodes $k$ in the subtree of the first child, 
and then the number of leaves in the code tree is $k/2$. For decompression,
we start from the root and descend left or right depending on the
bits of the code. Each time we go right, we accumulate the number of leaves on
the left, so that when we arrive at a leaf the decoded symbol is the final 
accumulated value plus 1.

\subsection{Experimental Setup}


We used an isolated AMD Phenom(tm) II X4 955 running at $800$MHz
with $8$GB of RAM memory and a ST3250318AS SATA hard disk. The operating system is
GNU/Linux, Ubuntu 10.04, with kernel 3.2.0-31-generic. All our
implementations use a single thread and are coded in {\tt C++}. The
compiler is \verb|gcc| version $4.6.3$, with \verb|-O9| optimization.
Time results refer to {\sc cpu} user time.
The stream to be compressed and decompressed is read from and written to
disk, using the buffering mechanism of the operating system. 

\begin{table}[t]
\begin{center}
\begin{tabular}{l|rrrrr}

Collection & Length & Alphabet & Entropy   & Depth     & Level entr. \\
           & ($N$)~~~& ($n$)~~~~~& ($\HH(P)$) & ($\lmax$) & ($\HH_0(L)$)~~ \\ 
\hline
\EsWiki	& 200,000,000 & 1,634,145 & 11.12 & 28 & 2.24 \\
\EsInv & 300,000,000 & 1,005,702 & 5.88  & 28 & 2.60 \\
\Indo   & 120,000,000 & 3,715,187 & 16.29 & 27 & 2.51 \\
\end{tabular}
\end{center}
\caption{Main statistics of the texts used.}
\label{tab:coll0}
\end{table}

\begin{table}[t]
\begin{center}
\begin{tabular}{l|rrrrr}
Collection & Naive~~ & Engineered & Canonical & Ours~~~ & Compressed \\
           & ($n w$)~~ & ($n\,\lmax$)~~ & ($n\lg n$)~ & ($n \HH_0(L)$) & \cite{TM00}~~~~~~\\
\hline
\EsWiki &  6.23 MB &  5.45 MB & 4.02 MB & 0.44 MB & 0.45 MB \\
\EsInv &  3.83 MB &  3.35 MB & 2.39 MB & 0.31 MB & 0.33 MB \\
\Indo   & 14.17 MB & 11.96 MB & 9.67 MB & 1.11 MB & 1.18 MB \\
\end{tabular}
\end{center}
\caption{Rough minimum size of various model representations.}
\label{tab:sizes}
\end{table}

We use three datasets\footnote{Made available in \tt http://lbd.udc.es/research/ECRPC} in our experiments. \EsWiki\ is a sequence of word
identifiers obtained by stemming the Spanish Wikipedia with the Snowball 
algorithm. Compressing natural language using word-based models is a strong
trend in text databases \cite{Mof89}.
\EsInv\ is the concatenation of differentially encoded
inverted lists of a random sample of the Spanish Wikipedia. These have
large alphabet sizes but also many repetitions, so they are highly compressible.
Finally, \Indo\ is the concatenation of the adjacency lists of Web graph 
{\tt Indochina-2004} available at {\tt http://law.di.unimi.it/datasets.php}. 
Compressing adjacency lists to zero-order entropy is a
simple and useful tool for graphs with power-law degree distributions, 
although it is usually combined with other techniques \cite{CN10}.
We use a prefix of each of the sequences to speed up experiments.

Table~\ref{tab:coll0} gives various statistics on the collections. Apart from 
$N$ and $n$, we give the empirical entropy of the sequence
($\HH(P)$, in bits per symbol or bps), the maximum length of a Huffman code 
($\lmax$), and the zero-order entropy of the sequence of levels ($\HH_0(L)$, in bps).
It can be seen that $\HH_0(L)$ is significantly smaller than
$\lg\lmax$, thus our compressed representation of $L$ can indeed be up to
an order of magnitude smaller
than the worst-case upper bound of $n\lg\lmax$ bits.

Before we compare the exact sizes of different representations, which depend on
the extra data structures used to speed up encoding and decoding,
Table~\ref{tab:sizes} gives the size of the basic data that must be stored in
each case. The first column shows $n w$, the size of a naive model 
representation using computer words of $w=32$ bits. The second shows
$n\,\lmax$, which corresponds to a more engineered representation where we use 
only the number of bits required to describe a codeword. In these two, more 
structures are needed for decoding but we ignore them. The third column gives 
$n\lg n$, which is the main space cost of a canonical Huffman tree 
representation: basically the permutation of symbols (different ones for 
encoding and decoding). The fourth column shows $n\HH_0(L)$, which is a 
lower bound on the size of our model representation (the exact value will 
depend on the desired encoding/decoding speed). These raw numbers explain why
our technique will be much more effective to represent the model than the 
typical data structures, and that we can expect up to 7--9-fold space reductions
(these will decrease to 6--8-fold on the actual structures).
Indeed, this entropy space is close to that of a sophisticated model
representation \cite{TM00} that can be used only for transmitting the model in 
compressed form; this is shown in the last column.

\subsection{Representing Optimal Codes}
\label{sec:opt-exp}

Figure~\ref{modelos.optimal} compares compression and decompression times, as a
function of the space used by the code representations, of our new
data structure (COMPR) versus the table based representations described in
Section~\ref{sec:implem} (TABLE, TABLE$_S$, and TABLE$_E$). We used sampling periods of $\{16,32,64,128\}$ for the auxiliary data
structures added to the wavelet tree bitmaps to speed up $select$ operations \cite{NPsea12.1}, and parameter $b=14$ for
table based approaches (this gave us the best time performance). 

\begin{figure}[p]

    \begin{center}

\begin{tabular}{cc}
 \subfigure{ \label{resultd}
    \includegraphics[width=0.49\textwidth]{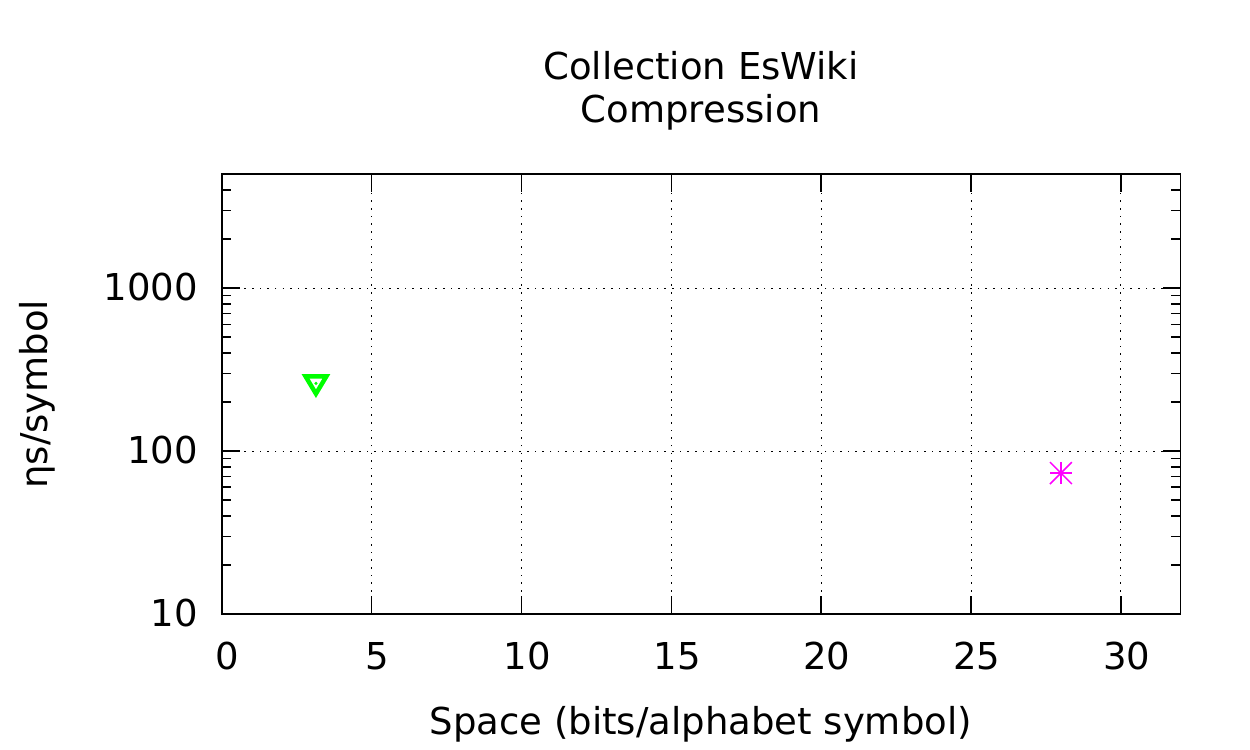}}
&
    \subfigure{ \label{resulta}
    \includegraphics[width=0.49\textwidth]{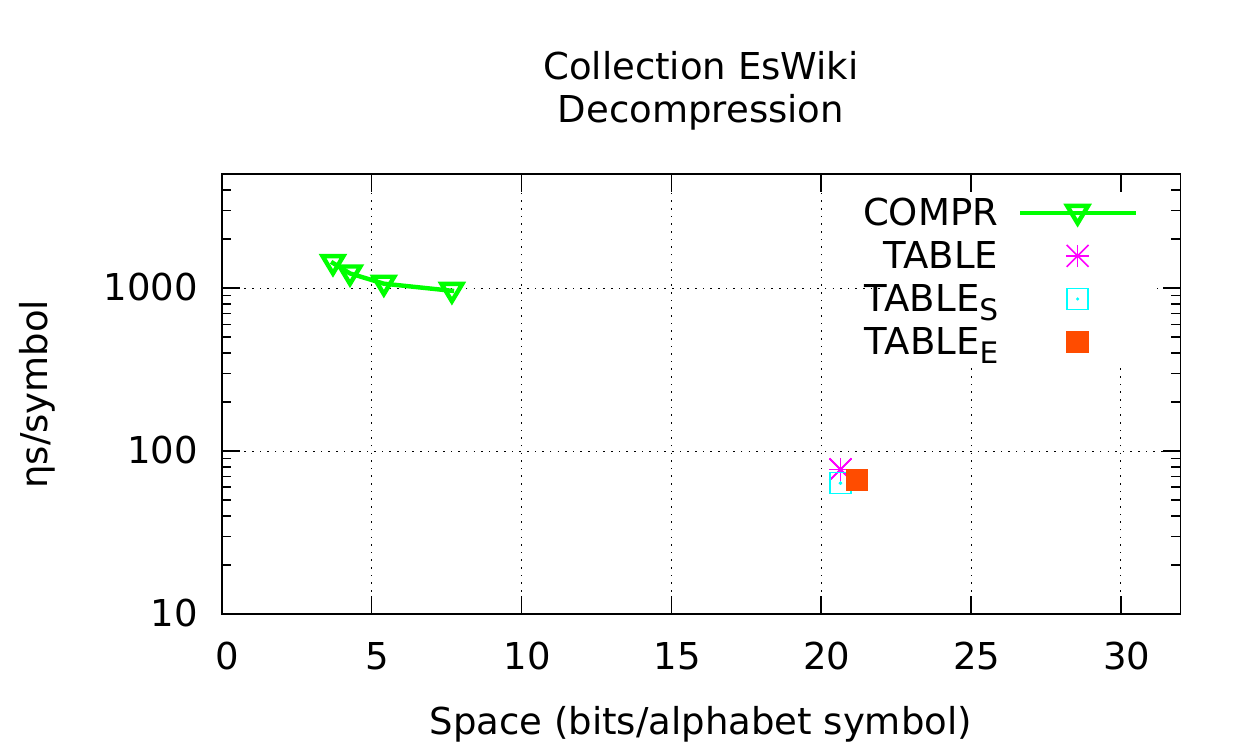}}
\end{tabular}

\begin{tabular}{cc}
    \subfigure{ \label{resulte}
    \includegraphics[width=0.49\textwidth]{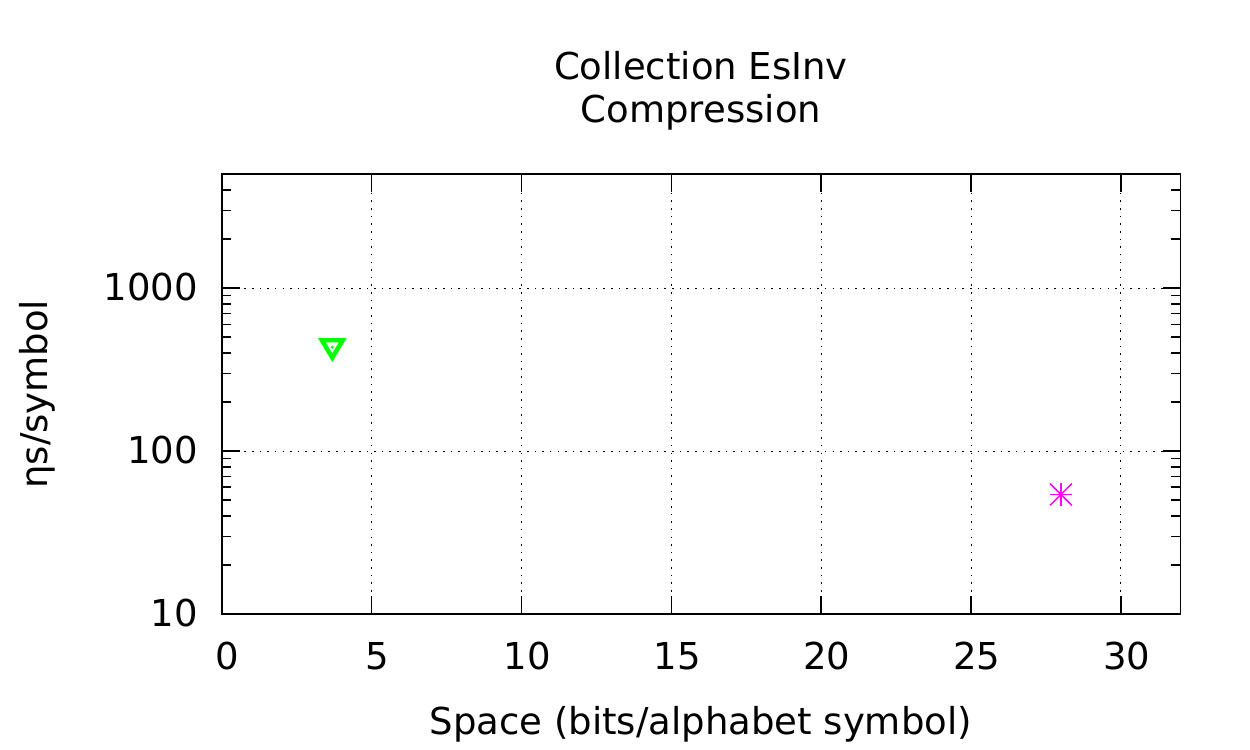}}
&
    \subfigure{ \label{resultb}
    \includegraphics[width=0.49\textwidth]{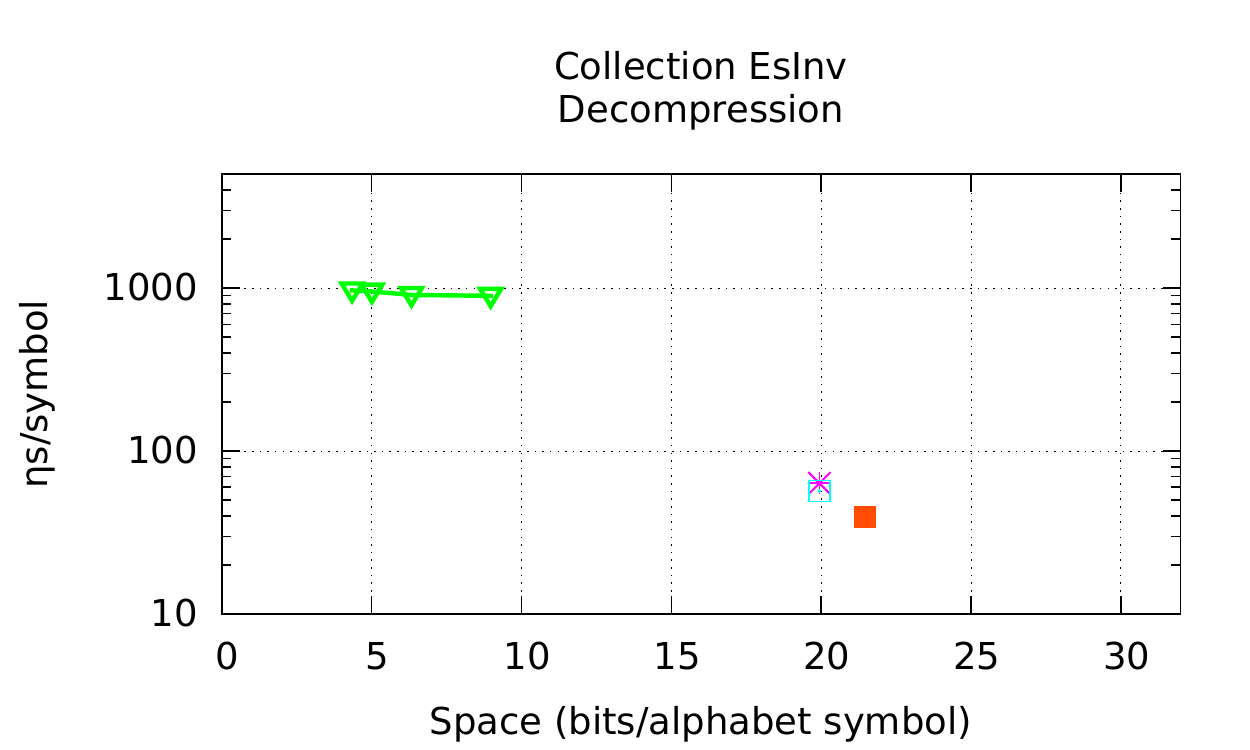}}
\end{tabular}

\begin{tabular}{cc}
    \subfigure{ \label{resultf}
    \includegraphics[width=0.49\textwidth]{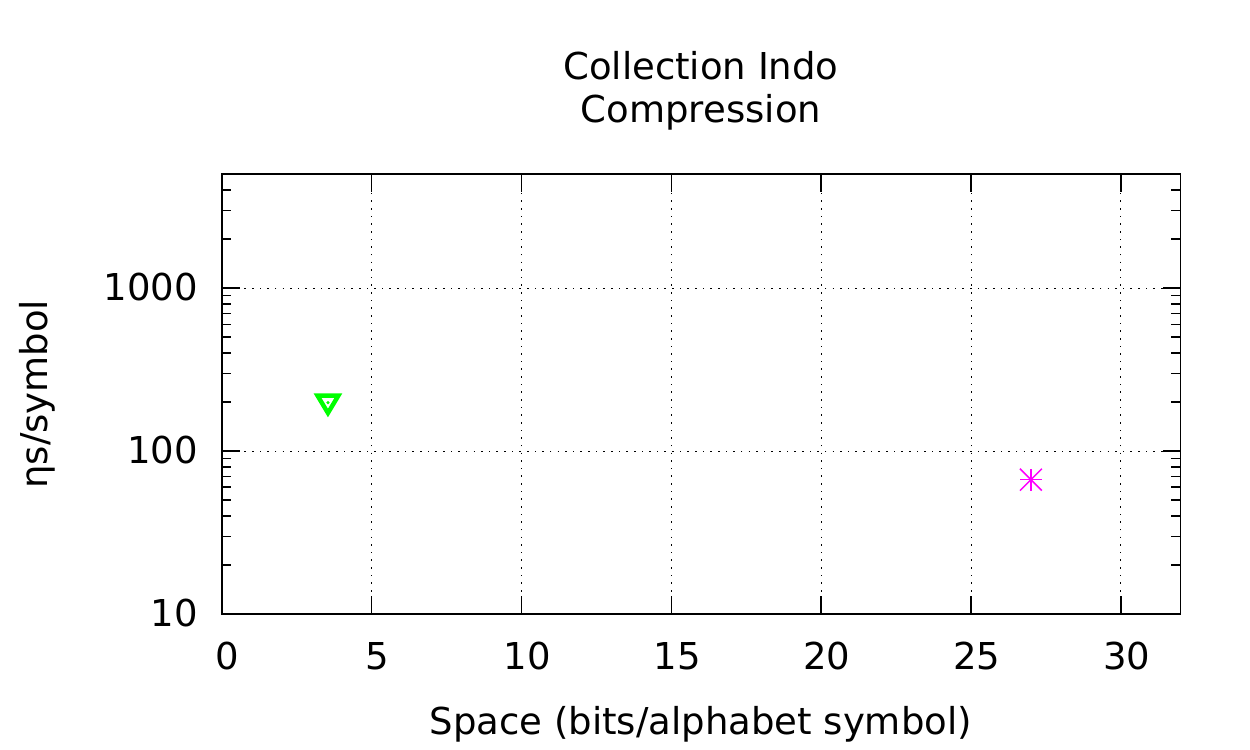}}
&
    \subfigure{ \label{resultc}
    \includegraphics[width=0.49\textwidth]{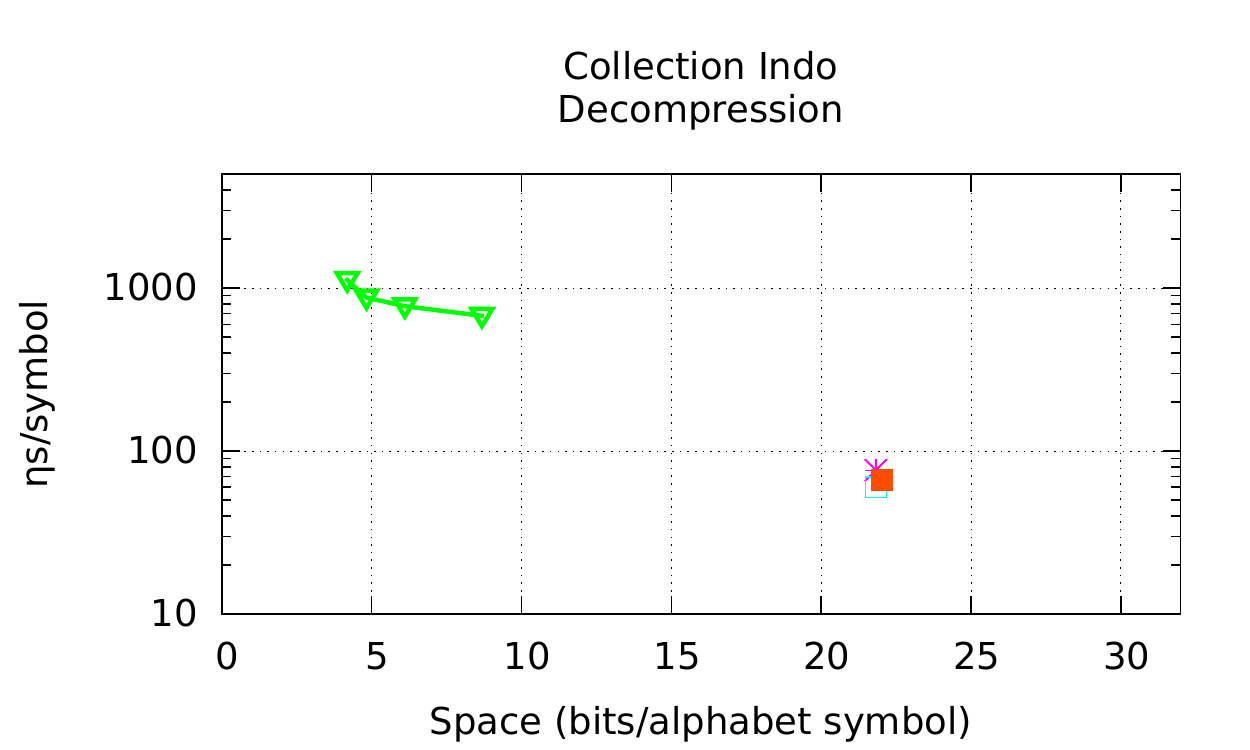}}
\end{tabular}

    \end{center}

   \caption{Code representation size versus compression/decompression time for
table based representations (TABLE, TABLE$_S$, and TABLE$_E$) and ours (COMPR). 
Time (in logscale) is measured in nanoseconds per symbol.} 
\label{modelos.optimal}
\end{figure}

It can be seen that our compressed representations takes just around 12\% of the space of the table implementation for compression
(an 8-fold reduction),
while being 2.5--8 times slower. Note that
compression is performed by carrying out $rank$ operations on the wavelet tree bitmaps. 
Therefore, we do not consider the space overhead incurred to speed up $select$
operations, and we only plot a single point for technique COMPR at compression charts. 
Also, we only show the simple (and most compact) TABLE variant, as the improvements of the others
apply only to decompression.

For decompression, our solution (COMPR) takes 17\% to 45\% of the space of the 
TABLE$_*$ variants (thus reaching almost a 6-fold space reduction), but it is 
also 12--24 times slower.
This is because our solution uses operation $select$ for decompression, and
this is slower than $rank$ even with the structures for speeding it up.

Overall, our compact representation is able to compress at a rate around 
2.5--5 MB/sec and decompress at 1 MB/sec, while using much less space than 
a classical Huffman implementation (which compresses/decompresses at around
14--25 MB/sec).

Finally, note that we only need a single data structure to both compress and 
decompress, while the naive approach uses different tables for each operation. 
In the cases where both functionalities are simultaneously necessary (as in
compressed sequence representations \cite{Nav14}), our structure 
uses as little as 7\% of the space needed by a classical representation.

\subsection{Length-Limited Codes}

In the theoretical description, we refer to an optimal technique for limiting 
the length of the code trees to a given value $\lmax \ge \lceil \lg n \rceil$ 
\cite{LH90}, as well as several heuristics and approximations:

\begin{itemize}
	\item \texttt{Milidi\'u}: the approximate technique proposed by
Milidi\'u and Laber \cite{ML01} that nevertheless guarantees the upper bound we 
have used in the paper.  It takes $\Oh{n}$ time.

	\item  \verb|Increase|: inspired in the bounds of Katona and
Nemetz~\cite{KN76}, we start with $f=2$ and set to $f$ the frequency
of each symbol whose frequency is $<f$. Then we build the Huffman tree, and 
if its height is $\leq \lmax$, we are done. Otherwise, we increase $f$ by 
$1$ and repeat the process. Since the Huffman construction algorithm is
linear-time once the symbols are sorted by frequency and the process does not
need to reorder them, this method takes $\Oh{n\log (n\phi^{-\lmax})}
= \Oh{n\log n}$ time if we use 
exponential search to find the correct $f$ value. A close predecessor of this
method appears in Chapter 9 of {\em Managing Gigabytes} \cite{WMB99}. They use 
a multiplicative instead of an additive approximation, so as to find an 
appropriate $f$ faster. Thus they may find a value of $f$ that is larger than
the optimal.

	\item \verb|Increase-A|: analogous to \verb|Increase|, but instead
adds $f$ to the frequency of each symbol. 

	\item \verb|Balance|: the technique (e.g., see \cite{BN09}), 
that balances the parents
of the maximal subtrees that, even if balanced, exceed the maximum allowed
height. It also takes $\Oh{n}$ time. In the case of a canonical Huffman tree, this
is even simpler, since only one node along the rightmost path of the tree
needs to be balanced.

	\item \verb|Optimal|: the package-merge algorithm of Larmore and
Hirshberg \cite{LH90}. Its time complexity is $\Oh{n\,\lmax}$.

%
%
\end{itemize}


\begin{figure}[p]
\begin{center}
\includegraphics[width=0.55\textwidth]{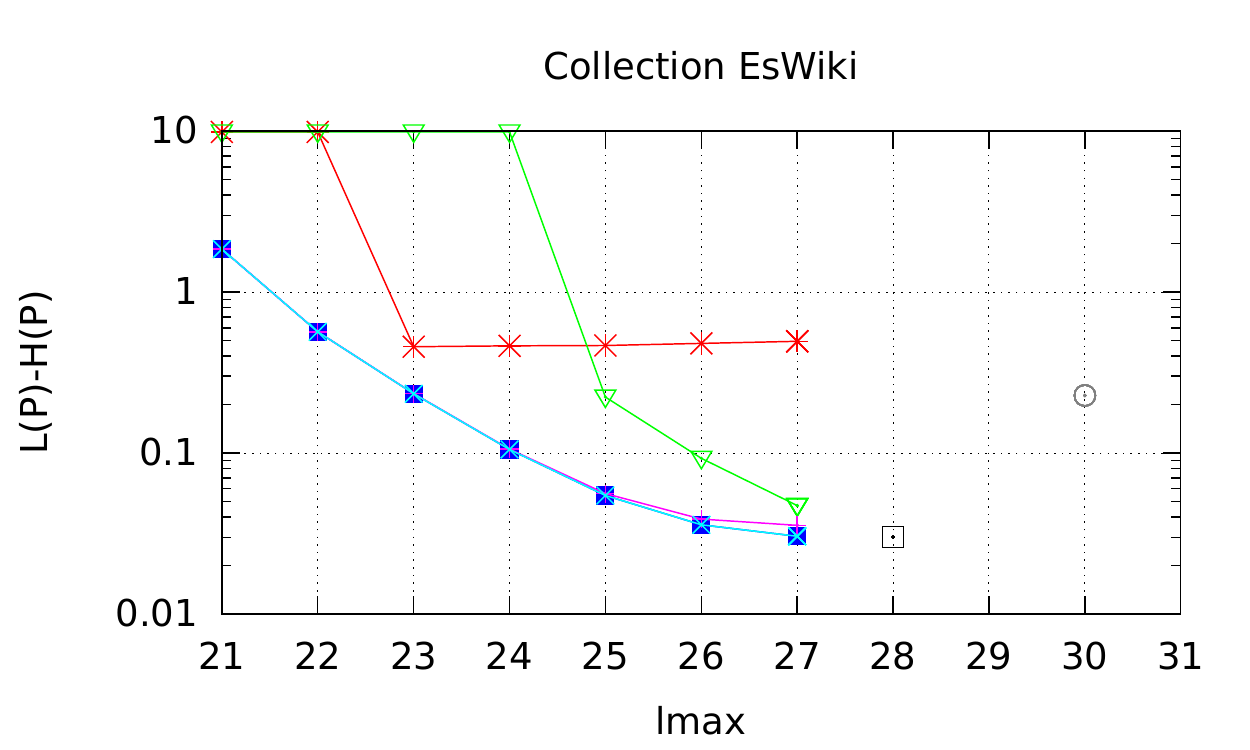} 
\includegraphics[width=0.55\textwidth]{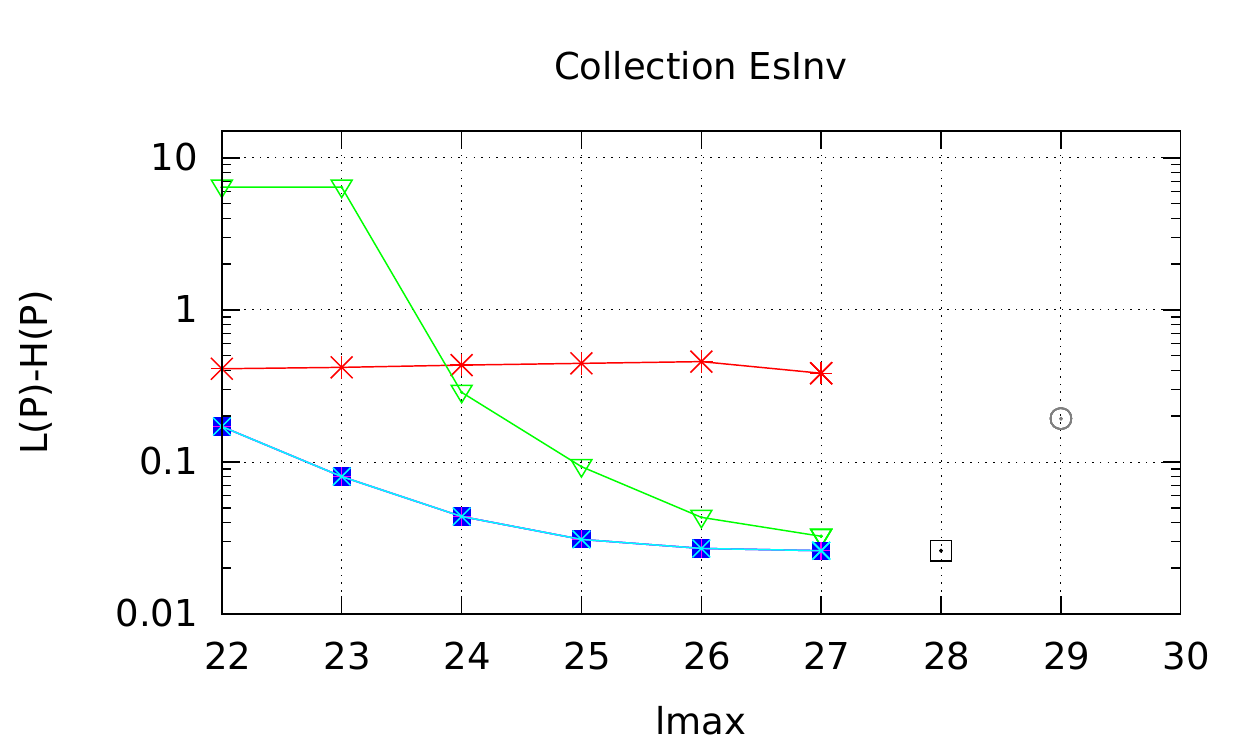} 
\includegraphics[width=0.55\textwidth]{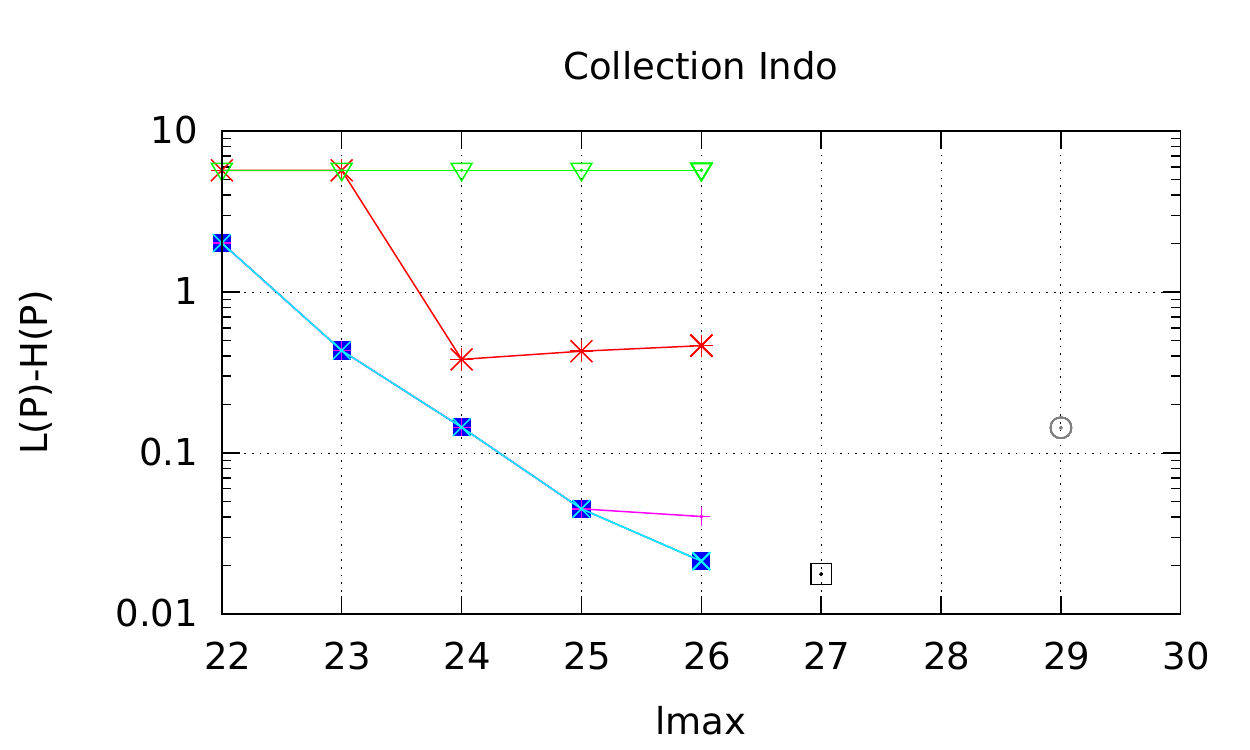} 
\begin{minipage}{0.45\textwidth}
\vspace{5mm}
\centerline{
\includegraphics[width=0.8\textwidth]{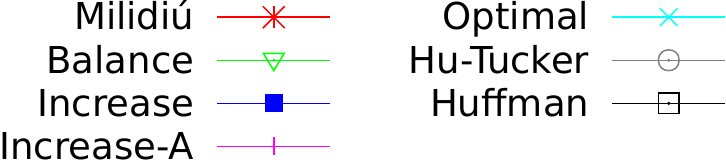}
}
\end{minipage}
\end{center}
\caption{Comparison of the length-restricted approaches measured as their
additive redundancy (in logscale) over the zero-order empirical entropy, $\HH(P)$, for each value of
$\lmax$. We also include Hu-Tucker and Huffman as reference points.}
\label{fig:suboptimal}
\end{figure}

Figure~\ref{fig:suboptimal} compares the techniques for all the meaningful
$\lmax$ values, showing the additive redundancy they produce over $\HH(P)$.
It can be seen that the average code lengths obtained by 
\texttt{Milidi\'u}, although they have theoretical guarantees, are not so good
in practice. They are comparable with those of \verb|Balance|, a simpler 
and still linear-time heuristic, which however
does not provide any guarantee and sometimes can only return a completely
balanced tree. On the other hand, technique \verb|Increase| performs better
than or equal to \verb|Increase-A|, and actually matches the average code 
length of \verb|Optimal| systematically in the three collections. 

Techniques \texttt{Milidi\'u}, \verb|Balance|, and \verb|Optimal| are all
equally fast in practice, taking about 2 seconds to find their
length-restricted code in our collections. 
The time for  \verb|Increase| and \verb|Increase-A|
depends on the value of $\lmax$. For large values of $\lmax$, they also take
around 2 seconds, but this raises up to 20 seconds when $\lmax$ is closer to 
$\lceil \lg n \rceil$ (and thus the value $f$ to add is larger, up to 100--300 
in our sequences).

In practice, technique \verb|Increase| can be recommended for its extreme
simplicity to implement and very good approximation results. If the
construction time is an issue, then \verb|Optimal| should be used. It
performs fast in practice and it is not so hard to implement%
\footnote{There are even some public implementations, for example
{\tt https://gist.github.com/imaya/3985581}}.
For the following experiments, we will use the results of
\verb|Optimal|/\verb|Increase|.

As a final note, observe that by restricting the code length to, say, 
$\lmax=22$ on \EsWiki\ and \EsInv\, and $\lmax=23$ on \Indo, the additive 
redundancy obtained is below $\epsilon = 0.6$, and the redundancy is below 
5\% of $\HH(P)$.



\subsection{Approximations}

Now we evaluate the additive and multiplicative approximations, in terms of 
average code length $\LL$, compression and decompression performance.
We compare them with two optimal model representations, OPT-T and OPT-C, which correspond
to TABLE and COMPR of Section~\ref{sec:opt-exp}. The additive approximations
(Section~\ref{sec:additive}) included, ADD+T and ADD+C, are obtained by
restricting the maximum code lengths to $\lmax$ and storing the resulting 
codes using TABLE or COMPR, respectively. We show one point per $\lmax = 
22\ldots 27$ on \EsWiki\ and \EsInv, and $\lmax = 22\ldots 26$ on \Indo. 
For the multiplicative approximation (Section~\ref{sec:multiplicative}), we 
test the variants MULT-$\lmax$, which limit $\lmax$ to 25 and 26, and use 
$c$ values 1.5, 1.75, 2, and 3. For all the solutions that use a wavelet 
tree, we have fixed a $select$ sampling rate to $32$.

Figure \ref{fig:modelos.approx.model.optimality} shows the results in terms of
bps for storing the model versus the resulting redundancy of
the code, measured as $\LL(P)/\HH(P)$.

\begin{figure}[p]
    \begin{center}
    \includegraphics[width=0.55\textwidth]{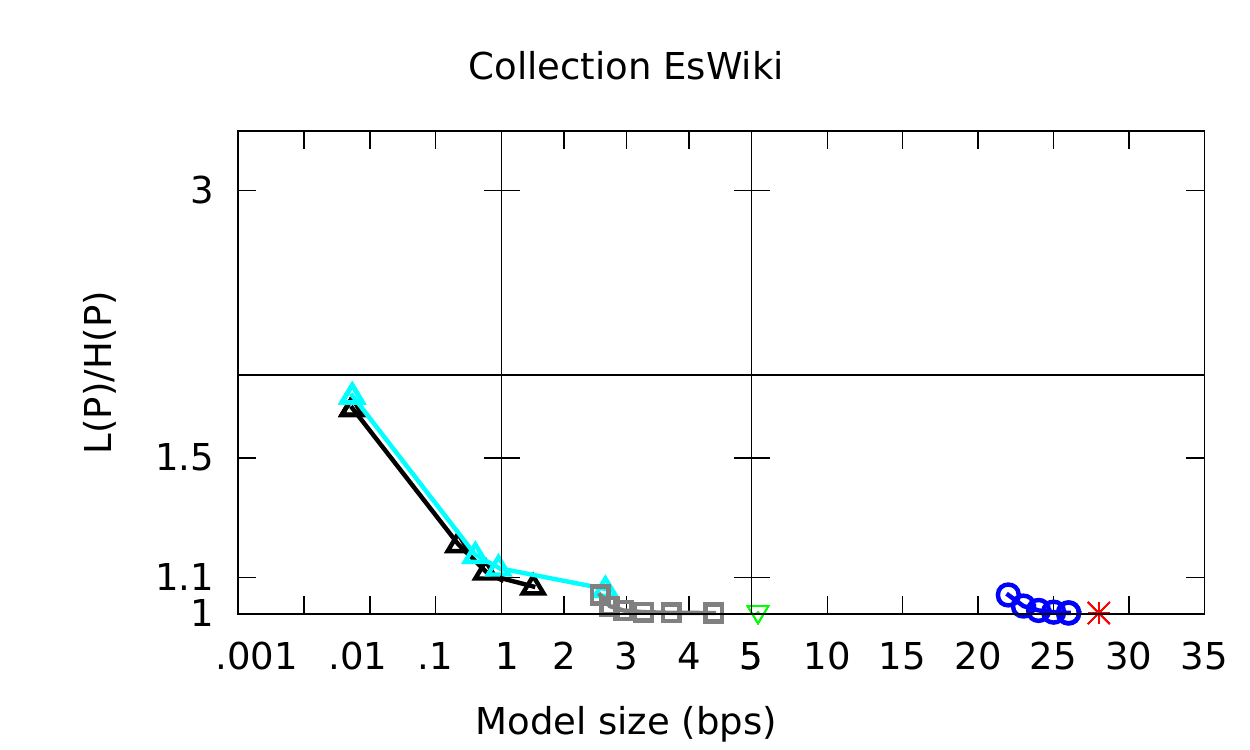}
    \includegraphics[width=0.55\textwidth]{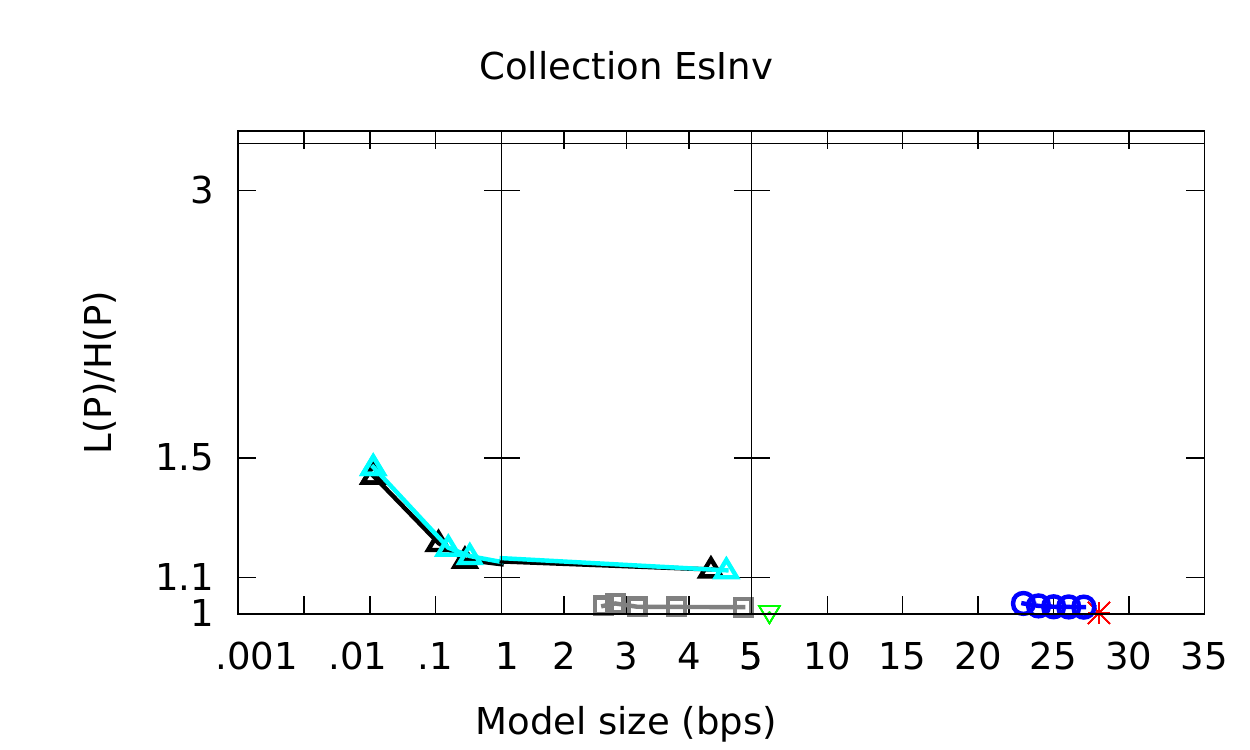}
    \includegraphics[width=0.55\textwidth]{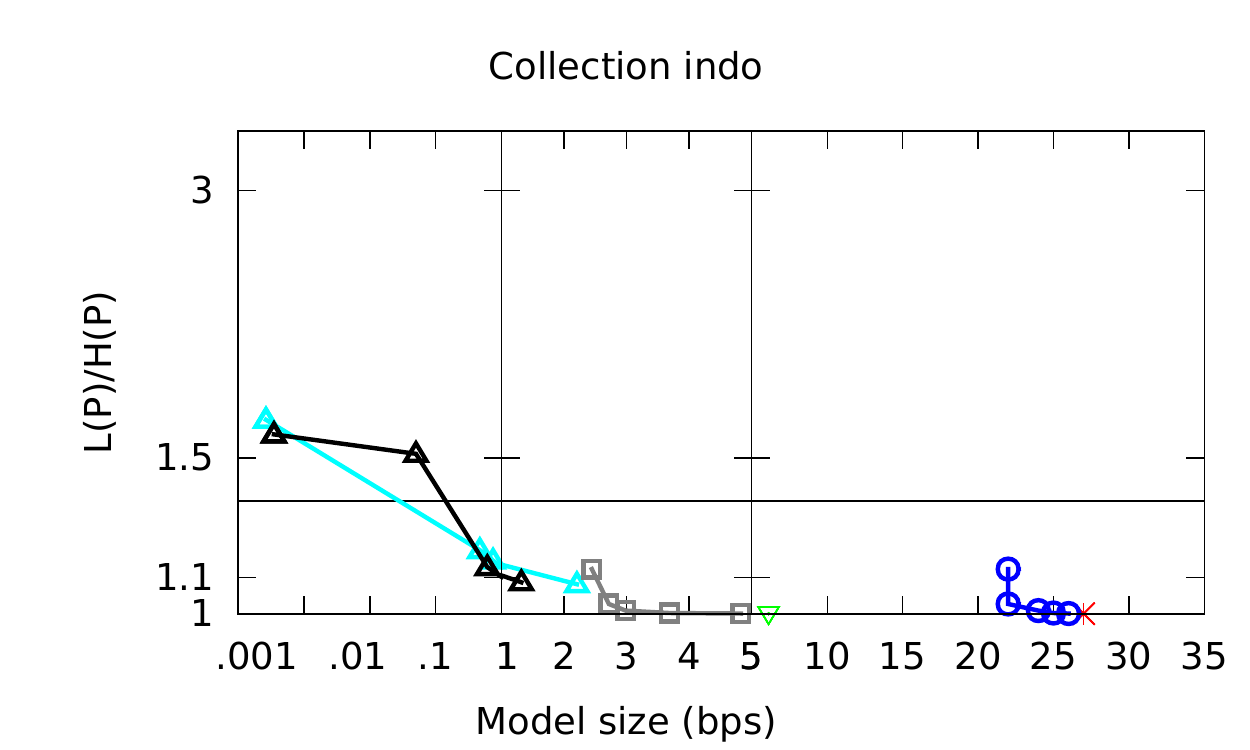}
\begin{minipage}{0.45\textwidth}
\vspace{5mm}
\centerline{
    \includegraphics[width=0.8\textwidth]{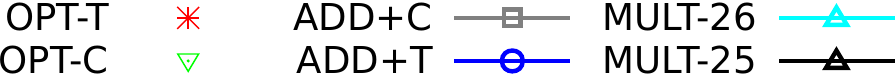}
}
\end{minipage}
    \end{center}
   \caption{Relation between the size of the model and the average code length. 
  The $x$-axes are in logscale for values smaller than 1 and in two linear
scales for 1--5 and 5--35. The horizontal line
shows the limit $\lceil \lg n\rceil / \HH(P)$, where no compression is
obtained compared with a fixed-length code.}
\label{fig:modelos.approx.model.optimality}
\end{figure}

The additive approximations have a mild impact when implemented in classical
form. However, the compact representation, ADD+C, reaches half the space of our
exact compact representation, OPT-C. This is obtained at
the price of a modest redundancy, below 5\% in all cases, if one uses reasonable
values for $\lmax$. 

With the larger $c$ values, the multiplicative approach is extremely efficient 
for storing the model, reaching reductions up to 2 and 3 orders of magnitude
with respect to the classic representations. However, this comes at the price
of a redundancy that can reach 50\%. The redundancy may go beyond 
$\lceil \lg n\rceil / \HH(P)$, at which point it is better to use a plain code 
of $\lceil \lg n\rceil$ bits. Instead, with value $c=1.75$, the model size is
still 20 times smaller than a classical representation, and 2--3 times smaller
than the most compact representation of additive approximations, with a redundancy 
only slightly over 10\%.

Figure \ref{fig:modelos.approx.code.decode} compares these representations in 
terms of compression and decompression performance. The numbers near each 
point show the redundancy (as a percentage over the entropy) of
the model representing that point. We use ADD+C with values
$\lmax=22$ on \EsWiki\ and \EsInv\, and $\lmax=23$ on \Indo. For
ADD+T, the decompression times are the same for all the tested $\lmax$ values. In this
figure we set the $select$ samplings of the wavelet trees to $(32, 64, 128)$.
We also include in the comparison the variant MULT-26 with $c=1.75$ and 1.5.

It can be seen that the multiplicative approach is very fast, comparable to the
table-based approaches ADD+T and OPT-T: 10\%--50\% slower at compression and
at most 20\% slower at decompression. Within this speed, if we use $c=1.75$,
the representation is 6--11 times smaller than the classical one for 
compression and 5--9 times for decompression, at the price of about 10\% of
redundancy. If we choose $c=1.5$, the redundancy increases to about 20\% but
the model becomes an order of magnitude smaller.

The compressed additive approach (ADD+C) achieves a smaller model than the
multiplicative one with $c=1.75$ (it is 14 times smaller than the classical
representation for compression and 11 times for decompression). This is 
achieved with significantly less redundancy than the multiplicative model, 
just 3\%--5\%. However, although ADD+C is about 20\%--30\% faster than the 
exact code OPT-C, it is still significantly slower than the table-based 
representations (2--5.5 times slower for compression and 9--17 for decompression).

\begin{figure}[p]

    \begin{center}

\begin{tabular}{cc}
    \subfigure{ \label{resulta2}
    \includegraphics[width=0.49\textwidth]{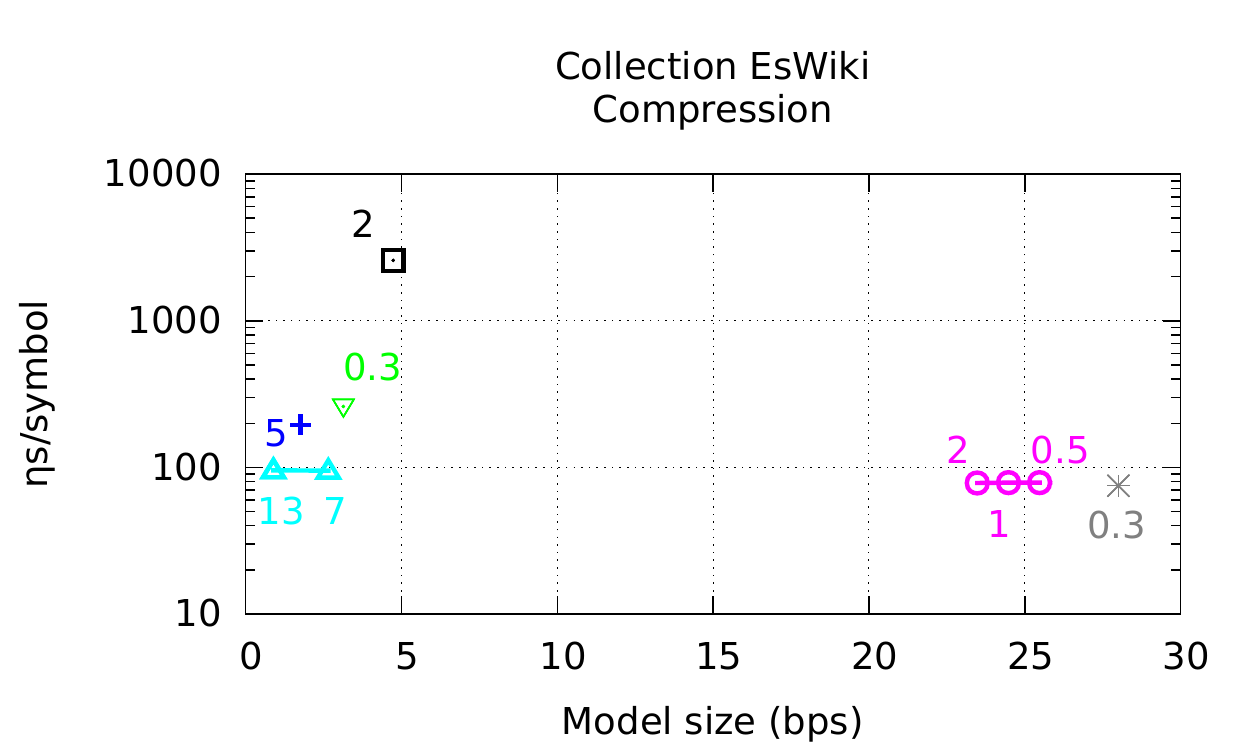}}
&
    \subfigure{ \label{resultd2}
    \includegraphics[width=0.49\textwidth]{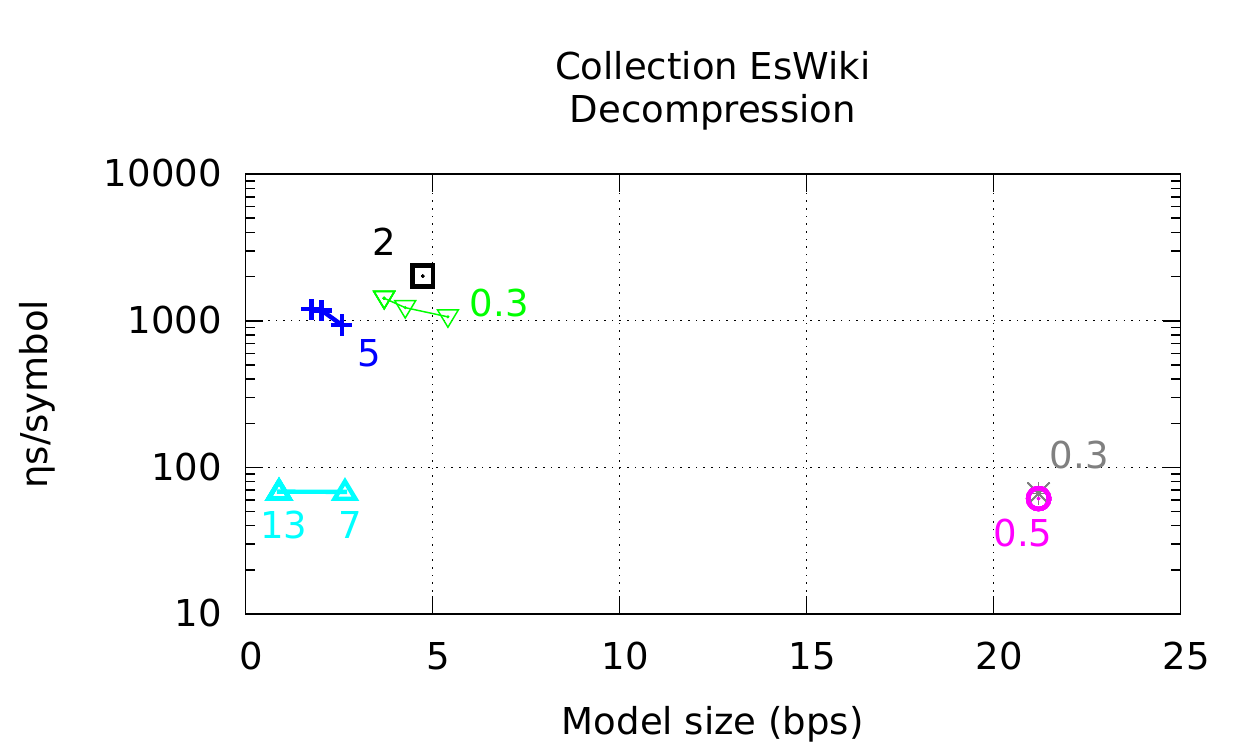}}
\end{tabular}

\begin{tabular}{cc}
    \subfigure{ \label{resultb2}
    \includegraphics[width=0.49\textwidth]{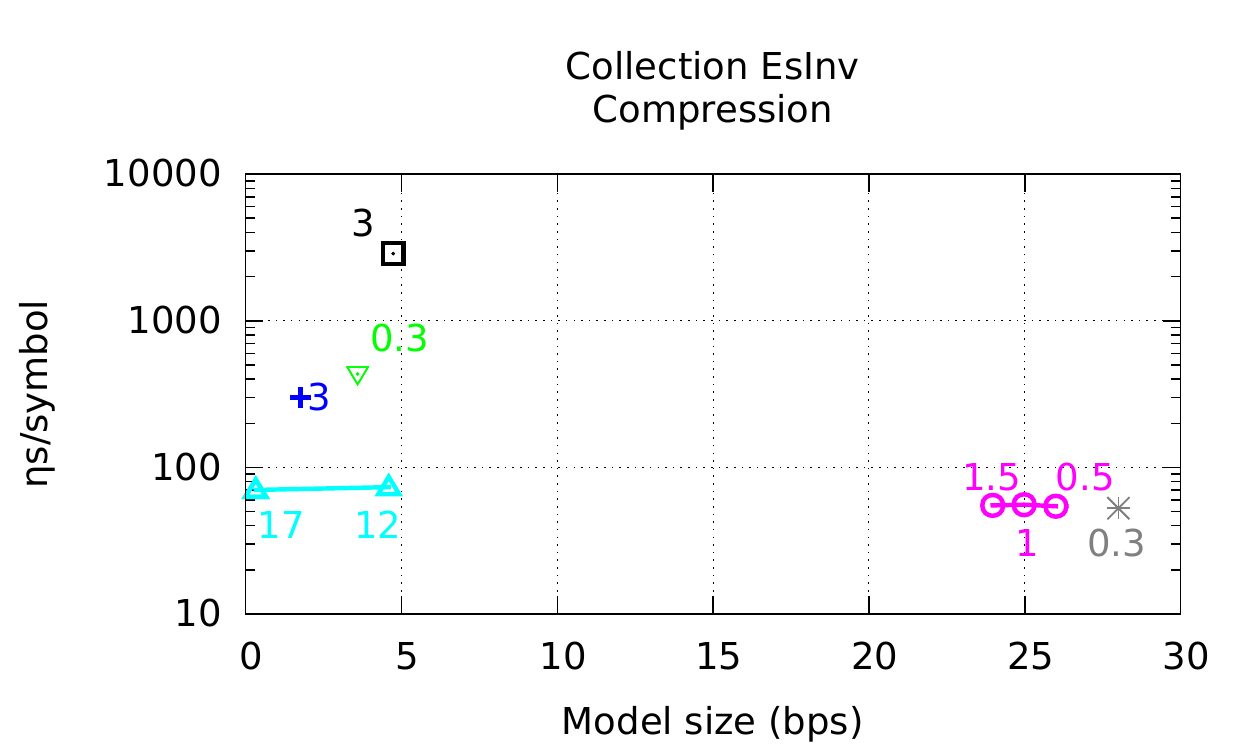}
    }
&
    \subfigure{ \label{resulte2}
    \includegraphics[width=0.49\textwidth]{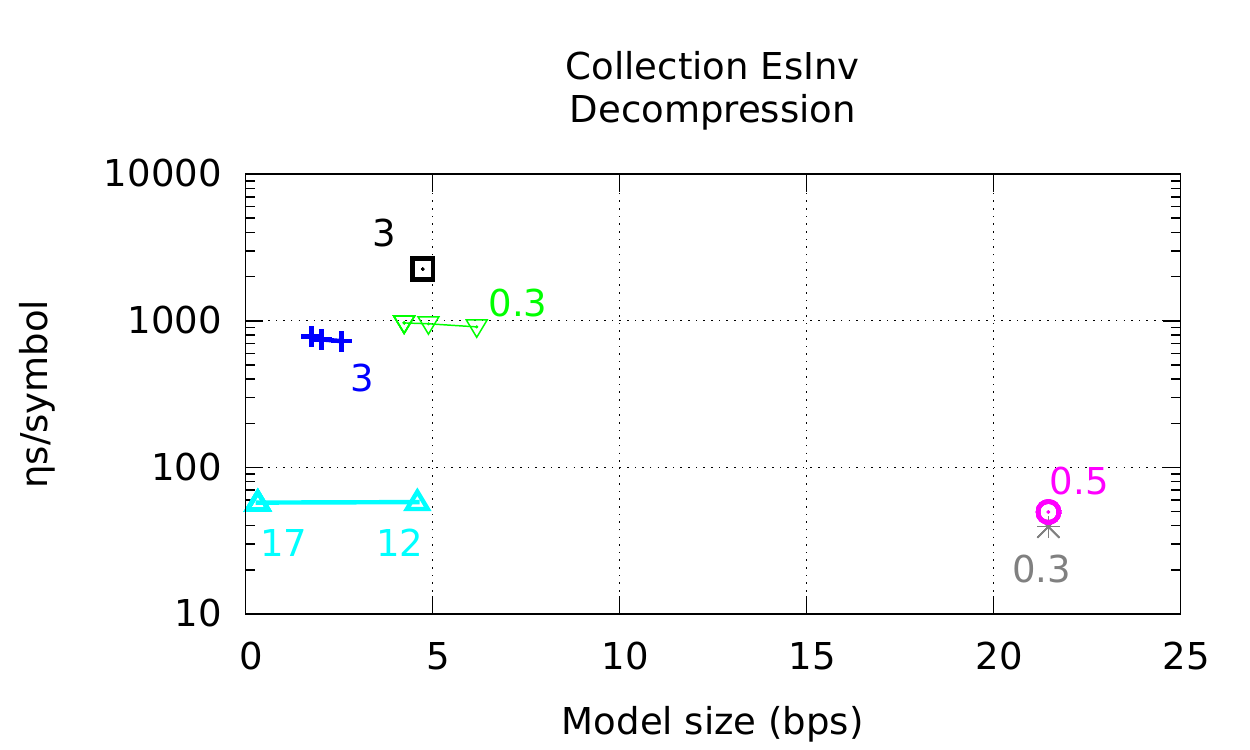}}
\end{tabular}

\begin{tabular}{cc}
    \subfigure{ \label{resultc2}
    \includegraphics[width=0.49\textwidth]{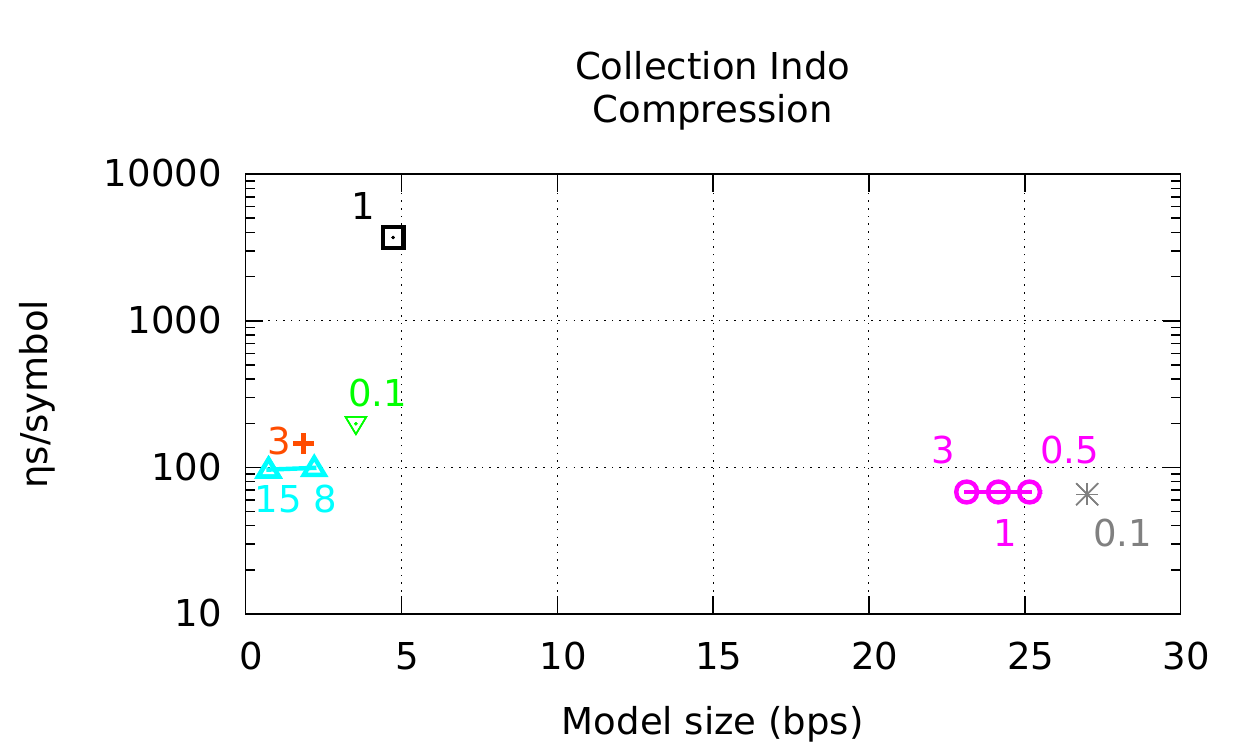}}
&
    \subfigure{ \label{resultf2}
    \includegraphics[width=0.49\textwidth]{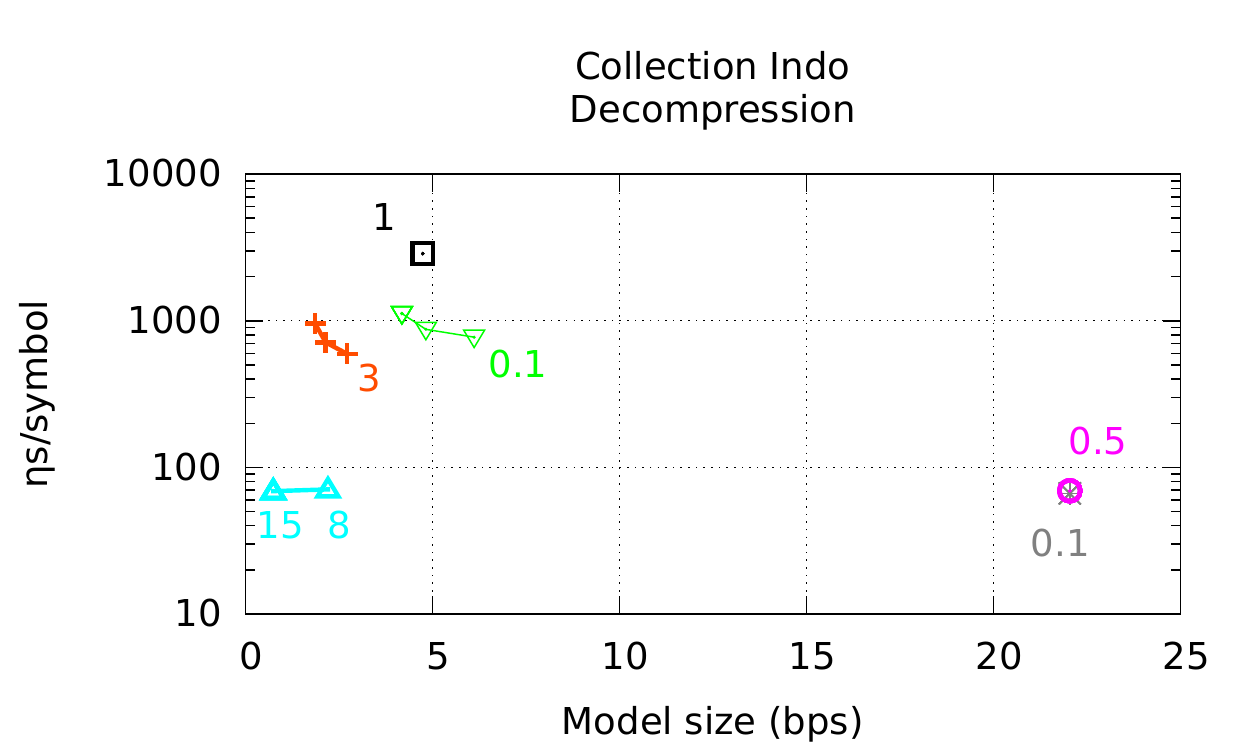}}
\end{tabular}

\includegraphics[scale=0.75]{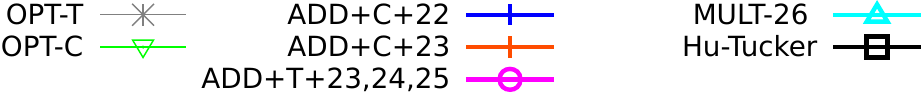}

    \end{center}
   \caption{Space/time performance of the approximate and exact approaches. 
Times are in
nanoseconds per symbol and in logscale. The numbers around the points are
their redundancy as a percentage of the entropy.}
\label{fig:modelos.approx.code.decode}
\end{figure}

Finally, we can see that our compact implementation of Hu-Tucker codes
achieves competitive space, but it is an order of magnitude slower than
our additive approximations, which can always use simultaneously less space 
and time. With respect to the redundancy, Figure~\ref{fig:suboptimal} shows 
that Hu-Tucker codes are equivalent to our additive approximations with 
$\lmax=23$ on \EsWiki, $\lmax=22$ on \EsInv, and $\lmax=24$ on \Indo. This 
shows that the use of alphabetic codes as a suboptimal code to reduce the 
model representation size is inferior, in all aspects, to our additive 
approximations. Figure~\ref{fig:modelos.approx.code.decode} shows that
Hu-Tucker is also inferior, in the three aspects, to our compact optimal
codes, OPT-C. We remark that alphabetic
codes are interesting by themselves for other reasons, in contexts where
preserving the order of the source symbols is important.

\section{Conclusions}

We have explored the problem of providing compact representations of Huffman
models. The model size is relevant in several applications, particularly
because it must reside in main memory for efficient compression and
decompression.

We have proposed new representations achieving constant compression and
decompression time per symbol while using $\Oh{n\log\log(N/n)}$ bits per
symbol, where $n$ is the alphabet size and $N$ the sequence length. This is in
contrast to the (at least) $\Oh{n\log n}$ bits used by previous representations. In
our practical implementation, the time complexities are
$\Oh{\log\log N}$ and even $\Oh{\log N}$,
but we do achieve 8-fold space reductions for compression and up to 
6-fold for decompression. This comes, however, at the price of increased 
compression and decompression time (2.5--8 times slower at compression and 12--24
at decompression), compared to current representations. In low-memory
scenarios, the space reduction can make the difference between fitting the 
model in main memory or not, and thus the increased times are the price to
pay.

We also showed that, by tolerating a small additive overhead of $\epsilon$ on
the average code length, the model can be stored in
$\Oh{n\log\log(1/\epsilon)}$ bits, while maintaining constant compression and
decompression time. In practice, these additive approximations can halve 
our compressed model size (becoming 11--14 times smaller than a classical
representation), while incurring a very small increase (5\%) in the average 
code length. They are also faster, but still 2--5.5 times slower for 
compression and 5--9 for decompression.

Finally, we showed that a multiplicative penalty in the average code length
allows storing the model in $o(n)$ bits. In practice, the reduction in model
size is sharp, while the compression and decompression times are only 10\%--50\%
and 0\%--20\% slower, respectively, than classical implementations. Redundancies
are higher, however. With 10\% of redundancy, the model size is close to
that of the additive approach, and with 20\% the size decreases by 
another order of magnitude.

Some challenges for future work are:
\begin{itemize}
\item Adapt these representations to dynamic scenarios, where the model 
undergoes changes as compression/decompression progresses. While our compact
representations can be adapted to support updates, the main problem is how to
efficiently maintain a dynamic canonical Huffman code. We are not aware of
such a technique.
\item Find more efficient representations of alphabetic codes. Our baseline
achieves reasonably good space, but the navigation on the compact tree
representations slows it down considerably. It is possible that faster
representations supporting left/right child and subtree size can be found.
\item Find constant-time encoding and decoding methods that are fast and
compact in practice. Multiary wavelet trees \cite{Bow10} are faster than
binary wavelet trees, but generally use much more space. Giving them the
shape of a (multiary) Huffman tree and using plain representations for the
sequences in the nodes could reduce the space gap with our binary Huffman-%
shaped wavelet trees used to represent $L$. As for the fusion trees, looking
for a practical implementation of trees with arity $w^\epsilon$, which
outperforms a plain binary search, is interesting not only for this problem, 
but in general for predecessor searches on small universes.
\end{itemize}

\paragraph{Acknowledgements.} We thank the reviewers, whose comments
helped improve the paper significantly.

\bibliographystyle{plain}
\bibliography{paper}

\end{document}